\def\le{\leqslant}\def\leq{\le}
\def\ge{\geqslant}\def\geq{\ge}
\newcommand{\eps}{\varepsilon}
\newcommand{\opt}{\mathrm{OPT}}
\newcommand{\cala}{{\mathcal A}}
\newtheorem{theorem}{\bf Theorem}
\newtheorem{lemma}[theorem]{\bf Lemma}
\newcommand{\sq}{\hbox{\rlap{$\sqcap$}$\sqcup$}}
\newcommand{\qed}{\hspace*{\fill}\sq}
\newenvironment{proof}{\noindent {\bf Proof.}\ }{\qed\par\vskip 4mm\par}
\begin{document}

\title{Locality-preserving allocations Problems and coloured Bin Packing}

\author{Andrew Twigg\thanks{Supported by a Junior Research Fellowship, St John's College}\\ 
  Computing Laboratory\\
  University of Oxford \\
  \texttt{andy.twigg@comlab.ox.ac.uk}  
  \and
  Eduardo C. Xavier \thanks{Supported by Fapesp and CNPq}\\
  Institute of Computing \\
  University of Campinas (UNICAMP), Brazil \\
  \texttt{eduardo@ic.unicamp.br}
}

\date{}

\maketitle
\pagestyle{headings}


\begin{abstract} 
 We study the following problem, introduced by Chung et al. in 2006. We are given, online or offline, a set of coloured items of different sizes, and wish to pack
 them into bins of equal size so that we use few bins in total (at most $\alpha$ times optimal), and that the items of each 
 colour span few bins (at most $\beta$ times optimal). We call such allocations $(\alpha, \beta)$-approximate.
 As usual in bin packing problems, we allow additive constants and consider $(\alpha,\beta)$ as the asymptotic performance ratios.
  We prove
  that for $\eps>0$, if we desire small $\alpha$, no scheme can beat $(1+\eps, \Omega(1/\eps))$-approximate allocations 
  and similarly as we desire small $\beta$, no scheme can beat $(1.69103, 1+\eps)$-approximate allocations. We give offline 
  schemes that come very close to achieving these lower bounds. For the online case, we prove that no scheme can even
   achieve $(O(1),O(1))$-approximate allocations. However, a small restriction on item sizes permits a simple online scheme
    that computes $(2+\eps, 1.7)$-approximate allocations.
\end{abstract}

\section{Introduction}

We consider the problem of computing locality-preserving allocations of coloured items to bins, so as to preserve locality (colours span few bins) but remain efficient (use a few total bins). The problem appears to be a fundamental problem arising in allocating files in peer-to-peer networks, allocating related jobs to processors, allocating related items in a distributed cache, and so on. The aim is to keep the communication overhead between items of the same colour small. One application for example appears in allocating jobs in a grid computing system. Some of the jobs are related in a such a way that results computed by one job is used by another one. There are also non-related jobs that may be from different users and contexts. Related jobs are of a same colour and each job has a length (number of instructions for example). In the grid environment each computer has a number of instructions donated by its owner to be used by the grid jobs. This way the objective is to allocate jobs to machines trying to use
few machines (bins) respecting the number
of instructions available (bins size), while also trying to keep related jobs together in as few machines as possible.
In peer-to-peer systems a similar problem also appears where one want to split pieces of  files across several machines, and want to keep pieces of a file
close  together to minimize the time to retrieve the entire file.

These problems can be stated as a fundamental bi-criteria bin packing problem. 
Let $I$ be a set of items, each item of some colour $c \in C$, and denote by $I_c$ the set of items of a given colour $c$.
 Denote by $\opt(I)$ the minimum number of bins necessary to pack
all items and denote by $\opt(I_c)$ the minimum number of bins necessary to pack only items of colour $c$, i.e, as if we had
a bin packing instance with items $I_c$. Let $A(I)$ be the number of bins generated by algorithm $A$ when packing all items, and
for each colour $c$, let $A(I_c)$ be the number of bins of this packing having items of colour $c$. We say that items of colour $c$ span $A(I_c)$ bins in this packing.
We want an algorithm that minimizes both ratios $\frac{A(I)}{\opt(I)}$ and $\max_{c \in C} \frac{A(I_c)}{\opt(I_c)}$. 
So we would like to allocate the items to bins so that we use few bins in total (at most $\alpha \opt(I)$, where we call $\alpha$ the \emph{bin stretch}), and the items of each colour $c$ span few bins (at most $\beta \opt(I_c)$, where we call $\beta$ the \emph{colour stretch}). We call such allocations (or packings) $(\alpha, \beta)$-approximate. The problem of minimizing any one of $\alpha$ or $\beta$ is equivalent to the classical one-dimensional bin packing, but as we show, in general it is not even possible to minimize them simultaneously. A natural extension is to consider bins as nodes of some graph $G$, and we want to allocate bins so that each subgraph $G_c$ induced by nodes containing items of colour $c$ has some natural property allowing small communication overhead, such as having low diameter, or small size.

We prove that for $\eps>0$, if we desire small bin stretch, no scheme can beat $(1+\eps, \Omega(1/\eps))$-approximate allocations and similarly as we desire small colour stretch, no scheme can beat $(1.69103, 1+\eps)$-approximate allocations. We give offline schemes that are based in well know bin packing algorithms 
and yet come very close to achieving these lower bounds. We show how to construct $(1+\eps, \Omega(1/\eps))$ and 
$(1.7, 1+\eps)$ approximate allocations, the first one closing the gap with the lower bound and the last one almost closing the gap.
For the online case, we prove that no scheme can even achieve $(O(1),O(1))$-approximate allocations. However, a small restriction on item sizes permits a simple online scheme that computes $(2+\eps, 1.7)$-approximate allocations.

\section{Preliminaries}

We now formulate the problem of computing locality-preserving allocations
as a coloured bin packing problem. We are given a set $I$ of
$n$ coloured items  each item $e$ with a size $s(e)$ in $(0,1]$ and
with a colour $c(e)$ from $C=\{1,\ldots,m\}$, and an infinite number of
unit-capacity bins. Let $I_c$ be the set of colour-$c$ items, and
denote by $\opt(I)$ ($\opt(I_c)$ respectively) the smallest possible number of bins needed to
store items in $I$ ($I_c$ respectively). For a packing $P$ of items $I$, define $P(I)$ as
the number of bins used to pack $I$, and define $P_c(I)$ as the
number of bins spanned by colour-$c$ items in the packing $P$. When
$I$ is obvious, we drop it and write $P$ and $P_c$. 

We define an $(\alpha,\beta)$-\emph{approximate packing} as one where:
(1) $P \leq \alpha \opt(I) + O(1) $ and (2)
  for each colour $c \in C$, $P_c \leq \beta \opt(I_c) + O(1)$. An
  algorithm that always produces $(\alpha,\beta)$-approximate packings
  is called an $(\alpha,\beta)$-\emph{approximation algorithm}.

As usual in bin packing problems, we allow additive constants and consider $\alpha$ (respectively $\beta$) 
as the asymptotic performance ratio as $\opt(I)$ (respectively $\opt(I_c)$) grows to infinity (and hence the total weight of items).
This is because a simple reduction from PARTITION (eg see \cite{coffman}) shows that, without allowing additive constants, it would be NP-hard to do better than $(1.5 - \eps, \delta)$ or $(\delta, 1.5 - \eps)$ approximate packings for any $\delta$.

When dealing with the online problem we have similar definitions for
the competitive ratio of an online algorithm, and in this case $\opt(I)$ corresponds
to an optimal offline solution to instance $I$ that has full knowledge of the request sequence $I$. As standard, we shall use the term approximation
ratio interchangeably with competitive ratio when discussing online algorithms (ie a 2-approximate online scheme is one that is within a factor 2 of the optimal offline scheme).

\subsection{Related work}

Chung et al.\cite{chung06} consider the case where each item is of a different colour and can be fractionally (arbitrarily) divided between bins, bins have different sizes and the total weight of items exactly equals the total weight of bins. They show how to compute an allocation that is asymptotically optimal for each colour. By contrast, we relax the assumption that we must exactly fill all the bins, and consider the case of indivisible allocations. In this setting, the problem is much more interesting: it is impossible to get arbitrarily good $(1+\eps,1+\eps)$-approximate allocations in general. Thus, these relaxed packings have a tradeoff between bin stretch and colour stretch, with polynomial-time approximations. We also consider for the first time the case where items arrive online. However, the case of heterogenous bins is open for our setting.

The nonexpansive hashing scheme of Linial and
Sasson~\cite{linial} can also be used to find a locality-preserving
packing for unit-size items. By defining the distance of two items to
be 0 if they are of the same colour, and $\delta>1$ otherwise, one can
interpret their dynamic hashing result as follows: for any $\eps>0$,
it is possible to hash unit-size items into bins in $O(1)$ time so
that they have use $O(\opt^{1+\eps})$ bins (giving bin stretch
$O(\opt^{\eps})$ and colour stretch $O(1)$.

Krumke et al.\cite{krumke} study a related `online coloured bin packing' problem 
where the goal is to minimize the number of different colours packed
into each bin, while using the entire capacity of each bin (in their
problem all items have same unit size).
However, this problem is quite different to ours. In particular,
an optimal solution problem when minimizing the number of colours per
bin may give arbitrarily bad bin stretch.
Consider $b$ bins of capacity $x$, and unit size items of many colours
$c_1,c_2,...,c_{(x-2)b+1}$. There will be $2b$ items of colour $c_1$
and 1 item of each of the other colours. Now, a $(1,1)$-approximate packing places $x$ colours from
$\{c_2...c_{(x-2)b + 1}\}$ into each bin and the items of $c_1$ into
the remaining bins. On the other hand, a packing minimizing the
maximum number of colours per bin (while using all the capacity of each
bin) will place 2 items of $c_1$ and $x-2$ items of other other colours
into each bin. Hence, considering colour $1$, it may be packed
individually into $\opt(I_1) = 2b/x$ bins, but in this solution it
spans $b$ bins, giving colour stretch $x/2$, which can be made
arbitrarily large.

There are some other variants of bin packing problems with colours, for example the so called
Colored Bin Packing that has the restriction that items of a same colour cannot be packed 
next to each other on a same bin.  Approximation algorithms for the online version of the problem
were presented by B{\"o}hm et al \cite{bohm2014online} and Dosa and Epstein \cite{DosaE14}.
A generalization where the objective is to pack a graph $G$ into another graph $H$ where nodes into
$H$ have capacities and nodes in $G$ corresponds to items of given size, was studied by Bujt{\'a}s et al. in \cite{bujtas2011graph}

The `class-constrained bin packing problem', studied by Golubchik et al.\cite{golubchik00},
Kashyap et al.\cite{kashyap}, Xavier et al.\cite{xavier08} and
Epstein et al.\cite{EpsteinIL10} is a coloured bin packing
problem. The aim is to minimize the number of bins used, subject
to the constraint that each bin contains items from at most $c$
different colours (and subject to its capacity constraints). This problem has applications
in developing algorithms for data placement on parallel disk arrays.
Again, optimal solutions to this problem
may be arbitrarily far from good if we wish to minimize colour stretch.

A recent survey covering different variants of bin packing problems was made by Coffman et al. \cite{coffman2013bin}.

%
%
\section{Impossibility results for Offline Algorithms}
\label{sec:lowerbounds}

We start by considering some lower bounds on what values of bin and colour stretch 
can be achieved simultaneously. All these bounds hold for offline
algorithms, so there is some inherent tension between the two measures
of colour stretch and bin stretch. 
We now show a lower bound on colour stretch, if we wish to take bin stretch arbitrarily small.

\begin{theorem}
For bin stretch $(1+\eps)$, it is impossible to achieve better than
$\Omega(1/\eps)$ colour stretch.
\label{thm:lb1}
\end{theorem}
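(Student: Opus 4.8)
The plan is to exhibit a single hard instance (depending on $\eps$) on which every packing of bin stretch $1+\eps$ is forced to shatter some colour across many bins. The instance should encode a tension: to keep the total number of bins within $(1+\eps)\opt(I)$ one must use almost-full bins, but the only material available to top up those almost-full bins is a finely divided, many-coloured filler, so topping them up necessarily scatters each filler colour. I would phrase the whole argument as a ``waste-budget'' computation: a low-bin-stretch packing has very little total empty space, and I want to convert that scarcity of empty space into a forced concentration of one colour's filler into the gaps of many distinct bins.

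Concretely, I would fix a small parameter $\delta = \Theta(\eps)$ (the calculation below will pin down $\delta = 2\eps$) and take $B$ large. The instance consists of $B$ \emph{big} items of size $1-\delta$, each given its own private colour, together with a \emph{filler} of tiny items of size $\eps' \ll \delta$ spread over $k$ colours (with $k$ fixed and $\eps'$ dividing both $\delta$ and $1$ evenly), the total filler volume being exactly $B\delta$. First I would record the easy facts: two big items cannot share a bin (their sizes sum to $2-2\delta>1$), so every packing devotes exactly $B$ bins to big items; placing one big item plus $\delta/\eps'$ filler items in each of $B$ bins fills them exactly, so $\opt(I)=B$ matches the volume bound; and each filler colour has volume $B\delta/k$, giving $\opt(I_c)=\lceil B\delta/k\rceil$. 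A colour-respecting packing that packs each filler colour densely on its own already uses $B(1+\delta)$ bins, so merely separating the colours costs bin stretch $1+\delta>1+\eps$ — this is the source of the tension.

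The core of the proof is the waste budget followed by pigeonhole. Since every big item occupies a private bin, a packing $P$ with $P(I)\le(1+\eps)\opt(I)+O(1)$ can afford only $\eps B+O(1)$ filler-only bins; as those bins hold at most their own volume, at least $B(\delta-\eps)-O(1)$ units of filler volume must instead be stuffed into the size-$\delta$ gaps of the big-item bins. Distributing this gap-filler over the $k$ colours, some colour $c^\ast$ contributes at least $\tfrac{1}{k}\bigl(B(\delta-\eps)-O(1)\bigr)$ of it; since each big-item bin can absorb at most $\delta$ units of any single colour, $c^\ast$ must span at least $\tfrac{B(\delta-\eps)-O(1)}{k\delta}$ bins. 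Dividing by $\opt(I_{c^\ast})=\lceil B\delta/k\rceil$ and letting $B\to\infty$ yields colour stretch at least $(\delta-\eps)/\delta^2$, and optimising the free parameter via $\delta=2\eps$ gives $\tfrac{1}{4\eps}=\Omega(1/\eps)$, as required.

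I expect the main obstacle to be getting the two competing effects to scale against each other correctly, rather than any single estimate. The spreading factor a colour suffers when used as gap-filler is governed by $1/\delta$ (a full filler bin holds $1/\eps'$ items while a gap holds only $\delta/\eps'$), whereas the bin-stretch budget that forces gap-filling at all is governed by $\delta-\eps$; only a choice $\delta=\Theta(\eps)$ makes both quantities simultaneously $\Theta(1/\eps)$ and $\Theta(\eps)$, and a careless construction (big items wasting only an $o(\eps)$ gap, or filler coarse enough to pack a bounded number per bin) collapses the bound to a constant — indeed a too-hasty version gives only colour stretch $2$. Care is also needed to keep the instance \emph{tight}, i.e.\ $\opt(I)$ equal to the total volume $B$, so that the entire $\eps B$ slack is genuinely a waste budget, and to verify that the additive $O(1)$ terms are lower-order, which is immediate once $B\to\infty$ with $k$ and $\delta$ held fixed.
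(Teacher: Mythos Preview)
Your proof is correct and follows essentially the same construction as the paper: big items of size $1-\delta$ each given a private colour, plus filler whose volume forces it into the $\delta$-gaps when the bin budget is only $(1+\eps)\opt$, with the optimisation $\delta=2\eps$ yielding the $1/(4\eps)$ bound. The paper's version is slightly leaner --- it uses a \emph{single} filler colour consisting of $n$ items of size exactly $\delta$, which removes the need for the pigeonhole over $k$ colours and for the auxiliary granularity parameter $\eps'$ --- but the waste-budget argument and the arithmetic are the same.
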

\begin{proof}
For $0 < \delta < 1/2$, consider the
instance containing $n$ items of size $1-\delta$, one for each colour
$1,\ldots,n$ and $n$ items of size $\delta$, all of colour $n+1$. 
For simplicity assume that $\eps = 1/x$ for some integer $x$, and
$\delta = 2\eps$, such that $\delta n$ and $\eps n$ are integers.
We have $\opt(I) = n, \opt(I_c) = 1$ for $c=1,\ldots,n$ and
$\opt(I_{n+1})= \delta n $. Assume we want to construct a
packing using at most $(1+\eps)n$ bins in total. 

Since items of colours $c=1,\ldots,n$ can not fit together, 
we use at most $n \eps$ bins only for colour $n+1$,
then at least $n - (n \eps / \delta) = n/2$ colour-$(n+1)$ items
overflow. Therefore at least $n/2 + \eps n = n(1/2 + \eps)$ bins
 are used for colour-$(n+1)$ items in any packing using at most
 $n(1+\eps)$ bins. Since $\opt(I_{n+1}) = \delta n$, the number of
 bins for colour-$(n+1)$ is at least 
\begin{eqnarray*}
 \opt(I_{n+1})(1/2 + \eps)/\delta & = & \opt(I_{n+1})(1/(4\eps) + 1 / 2) \\
&=& \Omega(1/\eps)\opt(I_{n+1}).
\end{eqnarray*}
The construction holds for $\delta<1/2$, so it is valid for $\eps < 1/4$.
 \end{proof}

On the other hand, if one wants to keep low colour stretch, no
bin stretch smaller than $1.69103$ can be achieved.

\begin{theorem}
For colour stretch $(1+\eps)$, it is impossible to achieve bin stretch better
than $1.69103$, for sufficiently small $\eps$.
\label{teo:lb2}
\end{theorem}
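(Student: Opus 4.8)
The plan is to reduce the colour-stretch constraint to a statement about classical bin packing, and then invoke the known lower bound of $1.69103$ (the FFD/asymptotic lower bound, closely related to the harmonic constant) for online or adaptive bin packing. The idea is that forcing colour stretch $(1+\eps)$ for sufficiently small $\eps$ effectively forces each colour class to be packed essentially optimally into its own set of bins, so that the only way to save bins overall is to combine \emph{different} colours into shared bins. But if combining colours is what buys us total-bin efficiency, I want to construct an instance where this combination is provably limited, so that any packing respecting colour stretch $(1+\eps)$ must pay a bin-stretch cost of at least $1.69103$.

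First I would choose the item sizes to mimic the classic construction that yields the $1.69103$ lower bound for bin packing heuristics. Recall that $1.69103\ldots = \sum_{i\ge 1} 1/(t_i - 1)$ where $t_1 = 2$, $t_{i+1} = t_i(t_i-1)+1$ is the Sylvester sequence; this constant arises from instances with items of sizes just above $1/t_i$ for successive $i$. The plan is to give each such ``size class'' its own colour, with enough items of each size so that $\opt(I_c)$ is large for every colour $c$. The point of the single colour per size class is this: the optimal per-colour packing $\opt(I_c)$ packs many equal-sized items tightly (many items of size $\approx 1/t_i$ pack $t_i-1$ or so per bin), but a \emph{globally} efficient packing would want to mix items of different sizes (hence different colours) into the same bin to fill the wasted space. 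A colour stretch of $(1+\eps)$ with small $\eps$ forbids spreading any colour across more bins than roughly its own optimum, which I would argue prevents exactly the cross-colour mixing that a single-criterion bin packing algorithm would exploit to approach an optimal total packing.

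The key steps, in order, would be: (1) fix the Sylvester-sequence sizes and assign one colour per size; (2) compute $\opt(I)$ for the combined instance, where items of different sizes may share bins, and compute each $\opt(I_c)$, where only equal items share a bin; (3) show that any packing with colour stretch $(1+\eps)$ must keep each colour confined to essentially $\opt(I_c)(1+o(1))$ bins, so that bins are nearly monochromatic up to lower-order terms; (4) conclude that the total number of bins is at least $\sum_c \opt(I_c)(1 - o(1))$, and compare this sum against $\opt(I)$; and (5) verify that the ratio $\big(\sum_c \opt(I_c)\big)/\opt(I)$ tends to $1.69103$ as the number of size classes grows and $\eps \to 0$. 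Steps (2) and (5) are the calculations that tie the construction to the Sylvester constant, and they should follow the standard bin-packing lower-bound analysis.

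The hard part will be step (3): quantitatively translating ``colour stretch at most $(1+\eps)$'' into ``bins are nearly monochromatic,'' and handling the fact that the additive $O(1)$ slack in the definition of an approximate packing could in principle let a few colours be spread out for free. I expect I will need each $\opt(I_c)$ to grow (take many items per colour) so that the additive constant is negligible against the multiplicative colour-stretch bound, and I will need a counting argument showing that if more than a negligible fraction of bins are shared between two colours, then at least one of those colours is forced to span strictly more than $(1+\eps)\opt(I_c)$ bins once $\eps$ is small enough relative to the size gaps. Getting this mixing-versus-spreading trade-off tight enough to hit $1.69103$ rather than a weaker constant is the crux, and it is exactly where the specific Sylvester-sequence sizes must be exploited rather than generic sizes.
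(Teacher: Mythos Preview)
Your proposal is correct and takes essentially the same approach as the paper: Sylvester-sequence item sizes with one colour per size class, then a counting argument that colour stretch $(1+\eps)$ forces all but $O(\eps n)$ of the colour-$c_i$ bins to contain the maximum number $l_i$ of items, these ``full'' bins being disjoint across colours by a size check, and finally summing $\sum_i 1/l_i \to 1.69103$. Your step (3) turns out to be easier than you anticipate---the paper just notes that if colour $c_i$ has $n$ items spanning at most $(1+\eps)n/l_i$ bins then $\sum_{j<l_i}(l_i-j)k_j^{c_i}\le \eps n$ bounds the number of non-full bins directly, with no delicate mixing-versus-spreading analysis required.
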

\begin{proof}
Consider the following Sylvester sequence with $l_0 = 1, l_{j+1} = l_j(l_j+1)$. For some
constant $m$, we assume we have items of $(m+1)$ different colours where,
for colour $c_i, i=0,\ldots,m$, we have a list of $n$ items each one with size $\frac{1}{l_i + 1} + \eps$, where
$\eps$ is a small enough constant that depends on the value of $m$. Note
however that $m$ is independent of $\eps$, and its value will be defined later. For each colour $c_i$ an optimal packing $\opt(I_{c_i})$ for colour stretch uses
$\frac{n}{l_i}$ bins for $i=0,\ldots,m$, each bin containing exactly $l_i$ items. We assume for simplicity
and w.l.o.g that $l_i$ divides $n$.
Let $P$ be the packing corresponding to the union of the optimal packings $\opt(I_{c_i})$ for each colour.
Notice that the bins of this packing cannot be joined together. 

Now consider an optimal packing $P^*$ for bin stretch, but which has
colour stretch at most $(1+\eps)$. We will show that if the packing has
colour stretch $(1+\eps)$, then most of the bins for each colour $c_i$
are packed like the optimal colour stretch packing. 
So $P^*$ uses almost the same number of bins as $P$.

For some colour $c_i$, let $k_j^{c_i}$ be the number of bins in $P^*$
that contain exactly $j$ items for $j=1,\ldots,l_i$.
We want to upper bound the number of bins that contain less than $l_i$
items, which is $\sum_{j=1}^{l_i-1}k_j^{c_i}$. Since $P^*$ has colour
stretch $(1+\eps)$, each colour $c_i$ must span at most $(1+\eps) n /
l_i$ bins. 
The number of bins used to pack items of colour $c_i$ can be bounded as follows:
$$k_{l_i}^{c_i} + \sum_{j=1}^{l_i - 1}k_{j}^{c_i} = \frac{n - \sum_{j=1}^{l_i-1} j k_j^{c_i}}{l_i} + \sum_{j=1}^{l_i-1} k_j^{c_i} \le (1+\eps) \frac{n}{l_i},$$
so we can write
$$ - \sum_{j=1}^{l_i-1} j k_j^{c_i} + l_i \sum_{j=1}^{l_i-1} k_j^{c_i} = \sum_{j=1}^{l_i-1}[ (l_i - j) k_j^{c_i}]  \le \eps n$$
Since $(l_i -j) \ge 1$ for $j=1,\ldots,l_i-1$, 
the number of bins in $P^*$ not containing exactly $l_i$ items of colour $c_i$
is at most $\sum_{j=1}^{l_i-1}k_j^{c_i} \le \eps n.$ Hence there are at least $ n - (l_i-1) \eps n$ items that must be packed in bins
containing $l_i$ items. So at least
$$\frac{ n - (l_i-1) \eps n}{l_i} = \frac{n}{l_i} (1- (l_i-1)\eps)$$
bins are used to pack only items of colour $c_i$. Note that $m$ is a constant and then 
$(l_i-1)$ is $O(1)$.  The packing $P^*$ must use at least $$P^*(I) \ge \sum_{i=0}^m \frac{n}{l_i} (1-(l_i-1)\eps)$$
bins, while an optimal solution for bin stretch uses exactly $n$ bins
by packing one item of each colour in a bin, so the bound
\begin{eqnarray*}
\frac{P^*(I)}{\opt(I)} &\ge& \sum_{i=0}^m \frac{1-(l_i-1)\eps}{l_i} \\
&=& \sum_{i=0}^m \frac{1}{l_i} - \eps (m - \sum_{i=0}^m \frac{1}{l_i}) \ge 1.69103 - \eps'
\end{eqnarray*}
holds for $m \ge 5$ and sufficiently small $\eps \le \frac{\eps'}{ m}$.
 \end{proof}

Somewhat suprisingly, the two correct bounds are not symmetric --
the upper bounds in the next section show that we can indeed achieve $(O(1), 1+\eps)$-approximation schemes.

%
%
%
\section{Offline Algorithms}
\label{sec:offline}

\subsection{A $(1+\eps,O(1/\eps))$-approximation algorithm}

We now describe how to achieve asymptotically the bound in Theorem \ref{thm:lb1}. 
We shall make use of the APTAS of Fernadez de La Vega and Lueker \cite{VegaL81} (VL), which operates as follows:

\paragraph{The APTAS VL:} fix some $\eps>0$, and separate items $I$ into small $I_s$ ($<\eps$) and large $I_l$($\ge \eps$). For the large items, 
sort them by increasing size and partition them into $K=1/\eps^2$ groups, each of at most $n \eps^2$ items.
Round each item up to the size of the largest item in its group, to obtain an instance $J$.

Each bin contains at most $1/\eps$ items from $I_l$, 
so the total number of different bin types is at most $t=\binom{1/\eps + K}{1/\eps}$, and the total number 
of possible packings using at most $n$ bins is at most $\binom{n+t}{t}$, which is polynomial in $n$. Therefore we can
enumerate these packings and choose the best one. Since we have rounded all items up in size, a packing of the rounded up items
gives a valid packing of the original items. The following elegant domination argument (from \cite{VegaL81}) shows that an optimal 
packing for the rounded up items uses at most a factor $(1+\eps)$ more bins than packing the original items: consider rounding \emph{down} 
item sizes to the smallest in the group to obtain an instance $J'$. Then a packing for $J'$ gives a packing for all but the 
largest group in $J$, which contains at most $n \eps^2$ items. Since each item has size $\ge \eps$, we have $\opt(I_l) \ge n \eps$. Thus,
$$\opt(J) \le \opt(J') + n\eps^2 \le (1+\eps) \opt(I_l).$$

Now take the small items $I_s$ and pack them into the remaining free space using first fit (FF). If we do not open more bins, then we 
already have at most $(1+\eps)\opt(I)$ bins. If we need to add more bins, then clearly each bin except at most 1 is full to at least 
$1-\eps$. In this case, we have at most $\opt(I)/(1-\eps) + 1 \le (1+2\eps) \opt(I) + 1$ bins.

\paragraph{Our modification: }
For our problem, we can use the rounding step, but we cannot use the FF step for small items (as some colours may be spread over many bins).
However, a small change fixes this: group small items by colour, pack each group using FF into existing bins having more than
 $2\eps$ of free space, then open more bins if necessary. With this idea, for each colour $c$, every bin (except at most 1) either 
contains at least $\eps$ weight of colour $c$, or no items of colour $c$ (if a bin contains a large item this is
clearly true, and if not, since we used FF and each bin has at least $2\eps$ of free space, at least half of this space is used).

So each colour spans at most $\opt(I_c) / \eps$ bins, giving the desired colour stretch.
For the bin stretch, the argument is similar to the one above -- if new empty bins are used when packing small items, then 
each bin is full to at least $(1-2\eps)$ and if not, we already have the desired number of bins.

%

\subsection{A $(1.7,1+\eps)$-approximation algorithm}
\label{sec:off1.7}

We now present an algorithm that almost closes the gap with the lower
bound of Theorem \ref{teo:lb2}. For this, we will
use both the APTAS of Fernadez de La Vega and Lueker (VL) \cite{VegaL81} described above, 
and the online bin packing algorithm Bounded Best-Fit (BBF), whose competitive ratio is
1.7 \cite{CsirikJ01}.

\paragraph{BBF}: maintain at most $k$ open bins, and the rest are closed and cannot be reopened.
An item of size $s$ is packed into the open bin that is most full and has space for the item, breaking
ties arbitrarily. If no such bin exists, the fullest bin is closed and a new empty bin opened.
It is known that BBF with $k=2$ has (asymptotic) competitive ratio 1.7 \cite{CsirikJ01}.

Our algorithm is presented in Algorithm \ref{alg:1.7-approx}. It first packs items of each colour separated
using the algorithm VL. Then given all $m$ packings for each colour in some order, we apply the algorithm
BBF over the items in the order the items appears in these packings.

\begin{algorithm}[h]
\caption{$\mbox{\sc A (1.7,1+$\eps$)-approximation algorithm}$} 
\label{alg:1.7-approx}
\begin{algorithmic}[1]
\begin{small}
\STATE Arbitrarily order colours $c_1 \ldots c_m$.
\FOR{each colour $c_i$}
\STATE pack items of this colour into new bins using the APTAS of (VL) (all bins are
monochromatic).
\ENDFOR
\STATE Let $P = P(c_1) \cup \ldots \cup P(c_m)$ be all bins generated. 
\STATE Let $P'$ be a new packing initially empty.
\FOR{each item $e$ in the order it appears in $P$}
\STATE Pack $e$ into $P'$ with BBF.
\ENDFOR
\STATE Return $P'$.
\end{small}
\end{algorithmic}
\end{algorithm}
We now prove a lemma that shall be useful in proving the desired
approximation ratio of the algorithm.
\begin{lemma}
Let $b = (B_{-(k-1)},\ldots,B_{0})$,  be some opened bins that may contain items, and
let $P=(B_1,\ldots,B_x)$ be bins packing items of some set $S$. Let $P'$ be the
packing generated over the items in $S$ by the BBF algorithm in the order
they appear in $P$ using the bins in $b$ as initially opened.
Then the number of used bins by $P'$ is at most $k + x$.
\label{lemma:bbf}
\end{lemma}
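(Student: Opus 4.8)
The plan is to show that, beyond the $k$ bins in $b$ that are initially open, BBF opens at most $x$ additional bins; since BBF keeps at most $k$ bins open at any time (it closes the fullest open bin whenever it opens a new one), the total number of bins ever used is then at most $k+x$. So the whole argument reduces to bounding the number of \emph{newly opened} bins by $x$.

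First I would record the precise ordering that makes this possible. Since the items of $S$ are fed to BBF in the order they appear in $P=(B_1,\ldots,B_x)$, the items belonging to a single bin $B_i$ form a consecutive block of the input sequence, and the sizes of the items in this block sum to at most $1$ (they all fit in the bin $B_i$ of $P$). I would then prove the key claim: while processing the block of items coming from one bin $B_i$, BBF opens at most one new bin. Summing this over $i=1,\ldots,x$ gives at most $x$ new bins, and hence at most $k+x$ bins in total.

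To prove the key claim, fix a bin $B_i$ and let $e_j$ be the first item of its block that triggers the opening of a new bin $N$ (if no such item exists, the block opens no new bin and we are done). When $N$ is opened it is empty, so $e_j$ is placed into $N$. I would then argue that no later item of the same block can trigger another new bin. Indeed, at the moment any later item $e'$ of the block is processed, the contents of $N$ form a subset of the earlier items of the block, so the space used in $N$ is at most the sum of those earlier sizes; since the whole block sums to at most $1$, the free space remaining in $N$ is at least $s(e')$, so $e'$ fits into $N$. Because BBF closes a bin only when it opens a new one, and (by this argument) no new bin is opened after $N$ within the block, the bin $N$ stays open throughout the rest of the block; hence every remaining item of the block fits into the already-open bin $N$, and no further new bin is opened.

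The one point requiring care — and the main obstacle — is the bookkeeping that guarantees $N$ simultaneously remains open and retains enough room for every subsequent item of the same block. This rests on three facts used together: the items of $B_i$ arrive consecutively, their total size is at most $1$, and BBF never closes an open bin except at the instant it opens a fresh one. Once these are combined in the induction over the items of the block, the bound of one new bin per block follows, and the lemma's bound of $k+x$ total used bins is immediate.
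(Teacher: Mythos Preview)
Your proof is correct. Both your argument and the paper's hinge on the same observation: during the processing of the consecutive block of items from one bin $B_i$, at most one fresh bin can be opened, because that fresh bin (containing only items from the block) always has room for the remaining items of the block and stays open since BBF closes a bin only upon opening another.

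The packaging differs, though. The paper orders the bins of $P'$ by closing time and proves the invariant that every item from $B_i$ lands in a bin with index at most $i$; the bound $k+x$ then drops out of the index range. You instead count directly, showing each block contributes at most one new bin, so the total is at most $k+x$. Your formulation is arguably cleaner: it avoids the closing-order indexing (which requires checking that the candidate bin $B'_i$ is still open when the violating item arrives---a step the paper leaves implicit), and it makes the inductive structure over items within a block explicit. The paper's invariant, on the other hand, is a slightly stronger statement that also localises \emph{where} the items of each block end up, not just how many bins they open.
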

\begin{proof}
Let $P' = (B_{-(k-1)}',\ldots,B_0', B_{1}', \ldots, B_{y}')$ be the bins
in the order they are closed by BBF. We will show that
any item $e \in B_i$ of $P$ for $i \in \{1,\ldots,x\}$ is packed in a bin $B_j'$ of $P'$ where $j \le i$.

Assume for contradiction that $e \in B_i$ is the first item packed in some bin $B_j'$ with $j > i$.  Since $e$ is the first
such item of $B_i$, all previous items $e' \in B_{i'}, i'=1,\ldots,i-1$ 
must have been packed in a bin $B_{j'}'$ with $j' \le i'$. So bin $B'_i$ only contains items
of $B_i$. But since $P$ is a valid packing, there must be room for $e$ in
$B'_i$.
 \end{proof}
\begin{theorem}
The algorithm computes $(1.7,1+\eps$)-approximate packings.
\end{theorem}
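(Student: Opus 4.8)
The plan is to bound the two quantities separately: the bin stretch $P'(I)/\opt(I)$ and, for each colour $c_i$, the colour stretch $P'_{c_i}/\opt(I_{c_i})$. The two bounds come from completely different sources, so I would treat them in turn.

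For the bin stretch I would argue directly from the competitive ratio of BBF, with no extra work. The final packing $P'$ is nothing more than the output of running BBF (with $k=2$) on the whole set $I$, presented in the particular order induced by $P$. Since BBF has asymptotic competitive ratio $1.7$ for \emph{any} arrival order, it uses at most $1.7\,\opt(I)+O(1)$ bins on this sequence, giving $P'(I)\le 1.7\,\opt(I)+O(1)$, i.e. bin stretch $1.7$. The reordering is irrelevant here precisely because the competitive ratio is a worst case over all orders.

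The colour stretch is where Lemma~\ref{lemma:bbf} is used, and I expect it to carry the real content. The key structural observation is that, because the colours are processed one at a time, the items of a fixed colour $c_i$ occur as a \emph{contiguous block} in the order fed to BBF. By the definition of BBF, at the moment this block begins at most $k=2$ bins are open, the rest being closed and never reopened. This is exactly the hypothesis of Lemma~\ref{lemma:bbf}: take $S=I_{c_i}$, let $b$ be the at most $k$ currently open bins, and let $P=P(c_i)$ be the monochromatic VL packing of colour $c_i$, which uses $x_i$ bins. The lemma then bounds the total number of bins involved while this block is processed — the $k$ initially open ones together with those newly opened — by $k+x_i$. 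The decisive step is to note that a colour-$c_i$ item can only land in a bin that exists while its block is being processed and never in a bin opened during a later block, so the span of colour $c_i$ is at most $k+x_i$. Since VL is a $(1+\eps)$-APTAS run on the stand-alone instance $I_{c_i}$, we have $x_i\le(1+\eps)\opt(I_{c_i})+O(1)$, and absorbing the constant $k=2$ into the additive term yields $P'_{c_i}\le(1+\eps)\opt(I_{c_i})+O(1)$, i.e. colour stretch $1+\eps$.

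The hard part will be applying Lemma~\ref{lemma:bbf} to a \emph{mid-stream} configuration of BBF rather than to a fresh run. I would need to verify carefully that the bounded open-bin count at each colour boundary really is the lemma's starting state $b$, and that colour-$c_i$ items cannot leak into bins created in later blocks, so that the lemma's total-bin count $k+x_i$ legitimately upper-bounds the span of that single colour. Once these two points are pinned down, the remaining estimates are routine, since $k$ is a fixed constant and the VL guarantee supplies the $(1+\eps)$ factor directly.
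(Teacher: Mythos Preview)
Your proposal is correct and follows essentially the same approach as the paper's proof: bin stretch comes directly from BBF's $1.7$ competitive ratio on the full sequence, and colour stretch comes from applying Lemma~\ref{lemma:bbf} at the start of each colour's contiguous block, combined with the VL bound on $|P(c_i)|$. The paper is terser about the mid-stream application of the lemma (it simply states the conclusion $P'_{c_i}(I)\le(1+2\eps)\opt(I_{c_i})+3$), but the justification you spell out---that at the boundary exactly $k=2$ bins are open and serve as the lemma's $b$, and that colour-$c_i$ items are confined to the at most $k+x_i$ bins touched during that block---is precisely what underlies that step.
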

\begin{proof}
The time bound follows since the number of colours is polynomial in
$n$, and both algorithms VL and BBF run in polynomial time.
In steps (1-4) we generate packings $P(c_i)$
for each colour $c_i$ such that $P(c_i) \le (1+2\eps)\opt(c_i) + 1$.
In steps (7-9) of the algorithm it is used the BBF algorithm to pack
the items in the order they appear in $P=(P(c_1),\ldots,P(c_m))$.
Since the BBF algorithm keeps at any time $k=2$ opened bins,
by the previous Lemma \ref{lemma:bbf}, in the final packing $P'$ we have for
each colour $c_i$,
$P'_{c_i}(I) \le (1+2\eps)\opt(I_{c_i}) + 3.$
Since BBF has approximation factor $1.7$ we also have the bound
$P'(I) \le 1.7\opt(I) + O(1)$
for the entire packing.
 \end{proof}

It is interesting to note that just packing the items of each colour in
order using bounded best fit gives a $(1.7, 1.7)$-approximate packing
(for the bin stretch, ignore colours then the entire packing is
1.7-approximate, and for colour stretch use Lemma \ref{lemma:bbf} above to get 
$P_c \le 1.7 \opt(I_c) + O(1)$ since the algorithm is bounded space).

Note that in order to have the $(1.7,1+\eps)$-approximation we need a bounded
space online algorithm on step 8 of Algorithm \ref{alg:1.7-approx}, but not necessarily
any online bounded space algorithm would work. We have to use an algorithm that satisfies
the property of Lemma \ref{lemma:bbf}. Consider the Harmonic algorithm \cite{LeeL85} for instance.
If we have an instance consisting of just one colour, then after step 8, the Harmonic algorithm would
have separated items by types of sizes creating an entire new packing and the $(1+\eps)$-colour stretch
would be lost.

%
%
%
%
%
%
%
%
%
%
%
%
%
%
%
%
%
%
\section{Online Algorithms}
\label{sec:nondivisible}

We now consider the online version of the problem. Coloured items
arrive and must be packed with no knowledge of future arrivals.
The main difficulty with constructing an online algorithm is that we
don't know in advance the total weight of each colour, but on the other hand would like
to reserve space so that colours of small weight aren't spread over many bins.

\subsection{Impossibility of online $(O(1),O(1))$-approximation}

Even under the restriction that items are $\ge \eps$, there is
still a lower bound $L \geq 1.5403$ for the online classical bin packing
problem, due to by Balogh et al. \cite{Balogh12} which improved  
a previous lower bound of $1.5401$ by van Vliet \cite{144239}.
 In this case, 
$L$ is also a lower bound for both bin stretch and colour stretch; we cannot
hope to do better in either parameter.
To see this, consider packing items of only one colour, then 
the number of bins used cannot be smaller than $L \opt + O(1)$.

We now show that no online $(O(1),O(1))$-approximation scheme 
exists. The idea is to consider items in rounds. In each
round an optimal packing for the items needs a single extra bin.
If a scheme has bin stretch $O(1)$ it only needs $O(1)$ bins per round,
but if the instance has a large number of colours then some fraction of colours
will be forced to split every few rounds, from which it follows that
some colour must split in at least a constant fraction of rounds.

\begin{theorem}
\label{thm:online-lowerbound}
There is no $(\alpha,\beta)$-approximation online algorithm for the coloured bin
packing problem where $(\alpha,\beta)$ are constants.
\end{theorem}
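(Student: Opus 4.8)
The plan is to construct an adversarial instance following the sketch given just before the theorem statement: present items in rounds, where each round forces the optimal packing to grow by exactly one bin, but where an algorithm with bounded bin stretch $\alpha$ is unable to keep every colour confined to few bins. Fix constants $\alpha$ and $\beta$ for contradiction, and suppose some online algorithm achieves bin stretch $\alpha$ and colour stretch $\beta$. First I would set up the instance so that there are a large number $m$ of colours, with items arriving so that after $r$ rounds the offline optimum is $\opt = r + O(1)$ bins. The key is to choose item sizes (all $\ge \eps$, say unit-fraction sizes summing neatly to one bin per round across colours) so that a single new bin suffices for the optimum each round, yet the colours are numerous enough that the algorithm cannot pack them tightly.

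The heart of the argument is a counting/pigeonhole step. In each round, the algorithm is allowed to use only $O(1)$ bins (since bin stretch is $\alpha$ and each round adds one to $\opt$, the total bin count after $r$ rounds is at most $\alpha r + O(1)$, so the average number of new bins per round is at most $\alpha$). With $m \gg \alpha$ colours active, only a bounded number of colours can receive a ``fresh'' bin in any given round; the remaining colours must be squeezed into already-open bins alongside other colours. I would then argue that whenever a colour is forced to share space but its items keep arriving, it must be spread across an increasing number of bins. Quantitatively, I would track a potential that counts, summed over colours, the excess spreading $\sum_c (P_c - \opt(I_c))$, and show that each round this excess must increase by a constant fraction of $m$, because at most $\alpha \cdot O(1)$ colours can be ``served well'' while the rest accumulate extra spanned bins.

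The crucial conclusion is that some individual colour must split in a constant fraction of the rounds. By an averaging argument over the $r$ rounds and $m$ colours, if the total accumulated excess spreading grows like $\Omega(r)$ per round in aggregate, then by pigeonhole at least one colour $c$ accumulates $P_c = \Omega(r)$ spanned bins, while its optimal $\opt(I_c)$ stays bounded (the adversary ensures each colour's own items fit into few bins, e.g. $\opt(I_c) = O(1)$). This forces the colour stretch $P_c / \opt(I_c)$ to grow without bound as $r \to \infty$, contradicting the assumed constant $\beta$. I would make the adversary adaptive: it observes which colours the algorithm has spread and feeds more items of exactly those colours, amplifying the damage, so that the bad colour's stretch is driven above any fixed $\beta$.

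The main obstacle I anticipate is making the ``forced splitting'' step rigorous while simultaneously keeping $\opt$ growing by only one bin per round. These two requirements pull against each other: to force splitting I want many colours and tight bins, but to keep $\opt$ controlled I need the item sizes to admit a near-perfect offline packing each round. The delicate part is choosing the item sizes and arrival schedule so that the \emph{offline} optimum can reuse structure across rounds (packing one representative of each colour per bin, as in the proof of Theorem~\ref{teo:lb2}) while the \emph{online} algorithm, lacking knowledge of future arrivals, commits early to placements that later prove wasteful. I expect to need the adaptivity of the adversary precisely here, together with a careful accounting showing that the $O(1)$ bins-per-round budget implied by bin stretch $\alpha$ is incompatible with the $\Omega(m)$ colours that demand attention each round.
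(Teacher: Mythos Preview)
Your high-level plan matches the paper's, but the central counting step has a real gap. In the natural instance (each round delivers one size-$1/n$ item of each of $n$ colours), the $n$ items of a single round fit in \emph{one} bin, so your per-round pigeonhole claim --- ``only $O(\alpha)$ colours can be served well in a round, the rest must spread'' --- is false as stated: the algorithm can drop all $n$ colours into the same fresh bin and nobody spreads that round. Splitting is not forced by a single round's budget but by \emph{cumulative weight}: eventually the bins holding a colour fill up. Your potential $\sum_c(P_c-\opt(I_c))$ does not obviously increase every round, and you have not said why it must.

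The paper closes exactly this gap with a stage structure rather than a round-by-round argument. It lets $s(i)=\alpha+\alpha^2+\cdots+\alpha^i$ and groups stages into blocks of length $d+1$ with $d\approx \tfrac{5}{\log\alpha}$; a weight-vs-capacity inequality ($\tfrac{7}{8}\alpha^d s(i)>\alpha s(i)$) then shows that in each block at least $n/8$ colours are forced to split, because the total weight of the non-splitting colours would overflow the bins available at the block's start. Averaging splits over blocks (not rounds) gives a colour with more than $\beta x$ splits while $\opt(I_c)=x$. Note also that in this construction $\opt(I_c)$ is \emph{not} $O(1)$ --- it grows linearly to $x$ --- so your plan to keep $\opt(I_c)$ bounded while $P_c\to\infty$ would require a different instance. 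Finally, the adversary in the paper is oblivious (same items every round); your adaptive idea of ``feed more items of the spread colour'' actually \emph{helps} the algorithm, since it inflates $\opt(I_c)$.
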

\begin{proof}
Consider an instance where there are $n$ different colours and where each item
has size $1/n$. We analyze the packing in rounds and in each round we receive a list $L$ of $n$
items of $n$ different colours. There will be at most $nx$ rounds where
$x$ is the number of bins necessary to pack any colour after $nx$ rounds, and
$n$ is going to be defined later.

Let $I_i$ be the total number of items until round $i$  (there are $ni$) . Since
the optimal packing for bin stretch uses $i$ bins, any
$(\alpha,\beta)$-algorithm must use at most $\alpha i$ bins by round $i$.
Wlog assume that in each new round 
the algorithm uses at most $\alpha$ bins, since otherwise it will have approximation ratio $>\alpha$
and the adversary will stop at this point. The algorithm may use less than $\alpha$ bins
in one round and more than $\alpha$ bins in a later round, but the average per round must
be $\alpha$. So we can assume that the algorithm opens $\alpha$ bins per round even if
it will only use some bins in a later round. So we focus on the colour stretch.
We will consider at most
$nx$ rounds, and so there are at most $nx$ items of each colour.
Clearly the optimal packing for colour stretch has $x$ bins per colour, so
$\cala$ must guarantee that each colour spans at most $x \beta$ bins.

In each round the algorithm has $\alpha$ bins available and some other bins
that were partially filled. Since the bin stretch is guaranteed to be at most
$\alpha$, the only job of the algorithm is to keep the $\beta$
approximation in colour stretch. We now show that any algorithm
$\cala$ must incur colour stretch larger than $\beta$ on the request
sequence.

Define $s(i)=\alpha+\alpha^2+\ldots+\alpha^i$, with $s(0)=0$. For $i \ge 1$, stage $i$ consists of rounds $(s(i-1)+1)\ldots s(i)$. At the end of stage $i$ the algorithm has $\alpha s(i)$ bins available, and items of total weight $s(i)$. It can be seen that
\begin{equation}
\label{eqn:si+d}
s(i+d) = \alpha^d s(i) + s(d) >
\alpha^d s(i).
\end{equation}
We consider groups of $d+1$ stages, where $d = \lceil \frac{5}{\log \alpha} \rceil$. 
We will show that for every group of $d+1$ stages, there exists a set of at least $n/8$ colours that split during this group, i.e colours
that need to be packed in more bins than the ones available in the beginning of the group.

Here is the proof of this claim. Assume that less than $n/8$ colours split during the first $d$ stages of the group starting at
round $s(i)+1$. Then there is a set of at least $7n/8$ colours that do not split during the next $d$ stages. All items
of these colours remain packed in the first $\alpha s(i)$ bins. In this case,  we have items of weight 
\begin{eqnarray*}
\frac{7}{8} (s(i+d) - s(i)) + s(i) &=& \frac{1}{8} (7s(i+d) + s(i)) \\
&>& \frac{7}{8} \alpha^d s(i)
\end{eqnarray*}
going into at most $\alpha s(i)$ bins (the last inequality uses (\ref{eqn:si+d})).
Now we choose $d$ so that $\frac{7}{8} \alpha^d s(i) > \alpha s(i) $,
which is satisfied by taking 
$d= \lceil \frac{5}{\log \alpha} \rceil >\frac{\log 8/7}{\log \alpha} + 1.$

This shows that at
least $7n/8$ ($>n/8$) colours (say $C'$) must split during the following (i.e. $(d+1)$th)
stage. The argument for this is the following: consider all the bins
that contained items of colours $C'$ at the start of the group. All these
bins become overfull just by considering the weight of items with
colours in $C'$. So for every colour $c$ in $C'$, at least one of its bins
splits, and so at least $|C'|$ colours split.

Clearly, an item of every colour is contained in some bin at the start
of every group, so the claim implies that after $q(d+1)$ stages, we
have at least $qn/8$ splittings. So taking $q > 8x\beta$, after 
$ \frac{125 x \beta}{\log \alpha} > 8x \beta (d+1)$
 stages, we have had $>x \beta n$ splittings, so some colour must have split $> x \beta$ times.
It remains to choose $n$ large
enough so that we have at least $\frac{\alpha}{\alpha-1} 2^{125 x \beta} > s(\frac{125 x \beta}{\log \alpha})$
rounds.
 \end{proof}

%
\subsection{An online $(3,1.7)$-approximation}
\label{sec:online1}

In this section we provide an online algorithm that computes
$(3,1.7)$-approximate packings, but we need to assume that each item has size at least $\eps>0$, where $\eps$ is a constant. 
Note that in the approximation factors there is a dependency on the value of $\eps$, since the bin  stretch is limited by $3\opt(I) + O(\log 1/\eps)$
and the colour stretch is limited by $1.7\opt(I) + O(\log 1/\eps)$.

Wlog assume that $\eps = \frac{1}{2^j}$ for some positive integer
$j$. We consider two types of bins: isolated, that corresponds to bins
packing only items of a given colour, and non-isolated, that may pack items
of different colours. For each $i=1, \ldots, j$ we define some special bins which we call level-$i$ bins. 
A level-$i$ bin is divided in exactly $1/2^i\eps$ regions each of size $ 2^i \eps$ for $i=1,\ldots,j$.
These regions are monochromatic (each region contains items
of at most one colour). A region in some level-$i$ bin is called a \emph{level-$i$ region}.
We use a modified NF algorithm MNF to pack items into non-isolated
bins, and switch to BBF to pack colours in isolated bins. MNF is
similar to NF and a description is given below (Algorithm
\ref{alg:MNF}).

\begin{algorithm}[h]
\caption{$\mbox{\sc Modified Next Fit (MNF)}$} \label{alg:MNF}
\begin{algorithmic}[1]
\begin{small}
\STATE To pack item $e$ of colour $c$ and size $s(e)$
\STATE{Let $i$ be the highest level where an item of colour $c$ is packed}
\STATE{(Let $i=1$ if this is the first item of colour $c$)}
\IF{$e$ can be packed in the level-$i$ region}
\STATE Pack $e$ into this level-$i$ region
\ELSE
\STATE Let $l > i$ be the lowest level such that $2^l \eps \ge s(e)$
\STATE Pack $e$ into a new level-$l$ region (possibly creating a new level-$l$ bin)
\ENDIF
\end{small}
\end{algorithmic}
\end{algorithm}

Notice that each colour occupies at most one level-$i$ bin, for each
level $i=1,\ldots,j$.
We say that a colour $c$ has level $i$ if $i$ is the largest level of a bin
containing items of colour $c$. 
The following algorithm uses MNF to pack items of the same colour
until the colour has level $j$. When 
this happens the algorithm starts packing items of the colour in isolated
bins (the last level-$j$ bin is also considered an isolated bin) using
the BBF algorithm. A description of the algorithm is given below (Algorithm \ref{alg:online_norepacking}).

\begin{algorithm}[h]
\caption{$\mbox{\sc A (3,1.7)-approximation
    algorithm.}$}  \label{alg:online_norepacking}
\begin{algorithmic}[1]
\begin{small}
\STATE To pack item $e$ of colour $c$ and size $s(e)$
\IF{colour $c$ has level $<j$}
\STATE Pack $e$ with MNF in the non-isolated bins
\ELSE
\STATE Pack $e$ with BBF in the isolated bins.
\ENDIF
\end{small}
\end{algorithmic}
\end{algorithm}

A region is used when there are items packed on it. A bin is used when
all its regions are used.
The following lemma states that level-$i$ bins that are used, have at least
$1/3$ of their capacity used by items. Notice that for each level $i$, at most one
level-$i$ bin is not using all its regions, since a new level-$i$ bin
is created only when all existing level-$i$ regions are used. So 
there are at most $O(\log \frac{1}{\eps})=O(1)$ non-isolated bins that have
some unused regions.
\begin{lemma}
\label{lem:pairlemma}
Consider the non-isolated bins that have all their regions in use.
On average, each bin has at least $1/3$ of its capacity used by items.
\end{lemma}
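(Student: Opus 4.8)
The plan is to pass from bins to regions and prove the equivalent statement that, summed over \emph{all} used regions $R$ in non-isolated bins, the total content is at least $\frac13$ of the total capacity. This suffices: a fully-used level-$i$ bin is exactly partitioned into its $1/(2^i\eps)$ used regions, whose capacities sum to $1$, and (as already noted before the lemma) at most one bin per level fails to be fully used, so only $O(\log\frac1\eps)$ bins carry unused regions. Hence the region inequality gives the claimed per-bin average of $\frac13$, up to the additive $O(\log\frac1\eps)$ slack that the statement of the algorithm already allows.

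First I would fix the chain structure that MNF imposes on a single colour $c$. Since a colour is promoted only to a level \emph{strictly} above its current top level, the regions it occupies form a chain $R_0,\dots,R_t$ at strictly increasing levels $i_0=1<i_1<\cdots<i_t$, with exactly one item, of size $a_k$, promoted out of $R_k$ into $R_{k+1}$ for $0\le k<t$. Writing $c_k=2^{i_k}\eps$ for the capacity and $w_k$ for the final content of $R_k$, three facts drive everything: (i) $w_k+a_k>c_k$ for $k<t$, because the promoted item did not fit and $R_k$ is frozen once $c$ leaves level $i_k$; (ii) $w_{k+1}\ge a_k$, because the promoted item lands in $R_{k+1}$; and (iii) $c_{k+1}\ge 2c_k$, with the sharper dichotomy that the step is either \emph{consecutive} ($i_{k+1}=i_k+1$, so $c_{k+1}=2c_k$) or a \emph{jump} ($i_{k+1}\ge i_k+2$), and in the jump case the minimality of $l$ in MNF forces $a_k>c_{k+1}/2$.

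The difficulty is that a region may be almost empty: a colour can be promoted into a high region by a single small item, or a region can freeze right after promoting out a large item. So no bound holds region by region, and a pairing argument is needed — this is the crux. Two observations enable it. By (ii) and the jump case of (iii), if the step into $R_k$ was a jump then $w_k\ge a_{k-1}>c_k/2$, so $R_k$ is more than half full; and $R_0$ sits at level $1$ with an item of size $\ge\eps=c_0/2$, so it too is at least half full. Consequently, calling $R_k$ \emph{poor} when $w_k<c_k/3$, a poor region has $a_{k-1}\le w_k<c_k/3<c_k/2$, so by the contrapositive of the jump case the step \emph{into} it is consecutive, i.e. $c_{k-1}=c_k/2$ (in particular $R_0$ is never poor, so $k\ge 1$ and $R_{k-1}$ exists).

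Then I would build the matching by scanning each chain from the top $R_t$ downward: a rich region is kept as a singleton with fill $\ge\frac13$, while a poor region $R_k$ is paired with the region $R_{k-1}$ directly below it, which exists and is consecutive by the previous observation. By (i) and (ii) this pair has content $w_{k-1}+w_k\ge w_{k-1}+a_{k-1}>c_{k-1}=c_k/2$ and capacity $c_{k-1}+c_k=\tfrac32 c_k$, so its fill exceeds $\frac13$ — and this is exactly where the constant $\frac13$ is forced. Scanning downward guarantees no region is consumed twice, since a region is used as a lower partner only just before it would itself be reached. Summing the resulting singletons and pairs over every chain and every colour yields the region inequality, and hence the lemma. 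I expect the matching step to be the main obstacle: one must verify that being poor forces the incoming step to land one level up (so the two capacities differ by exactly a factor $2$, giving the tight $\frac13$ rather than a weaker constant), and that the top-down scan genuinely produces a valid matching with no region reused.
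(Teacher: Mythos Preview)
Your proof is correct and follows essentially the same pairing argument as the paper: the paper groups each colour's regions into maximal runs of \emph{consecutive} levels, pairs adjacent regions within each run from the bottom up (treating the lowest region of an odd-length run as a half-full singleton, for exactly the reason you give), and uses the same overflow inequality $w_{k}+w_{k+1}\ge w_k+a_k>c_k=\tfrac13(c_k+c_{k+1})$ on each consecutive pair; your top-down scan that pairs only the ``poor'' regions is just a different way to organise the same matching. One small caveat: your assertion $i_0=1$ need not hold (the first item of colour $c$ may have size $>2\eps$ and be placed at a higher level), but then the minimality of $l$ in MNF gives that first item size $>c_0/2$, so your own jump reasoning already shows $R_0$ is more than half full and the argument goes through unchanged.
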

\begin{proof}
We will prove this by considering the levels used by any colour $c$
using non-isolated bins.

For each colour $c$, a \emph{group} is a maximal sequence of regions
$2^k\eps, 2^{k+1}\eps, \ldots,2^{k+p}\eps$  used by colour $c$
(each colour may occupy a number of disjoint groups). We will show that
for each group, its regions used by colour $c$ have $1/3$ of their area
occupied. Let $2^k\eps,2^{k+1}\eps, \ldots,2^{k+p}\eps$ be a group
used by colour $c$.

We have two cases:
\begin{itemize}
\item {\bf p is odd:} Consider the pairs of adjacent regions
$$ (2^k\eps, 2^{k+1}\eps), \ldots,(2^{k+p-1}\eps, 2^{k+p}\eps).$$
Since we used MNF to pack the items, for each pair of regions the
total weight of items is at least the size of the
region in the lowest level. 
Since the higher level region is twice the
size of the lower one, each pair has at least $1/3$ of its area occupied.

\item  {\bf p is even:} If $k\ge 2$ then there is an item in the first
region $2^k\eps$ of the group that could not fit in a previous used
region by colour $c$. This item was packed in the smallest region with
room for it. So this item occupies at least $1/2$ of region $2^k
\eps$. If $k = 1$ then the assumption that $s(e) \ge
\eps$ implies that this region is filled by at
least $1/2$. The remaining regions $2^{k+1}\eps,
\ldots,2^{k+p-1}\eps, 2^{k+p}\eps$ can be paired as in the odd case,
and for each pair at least $1/3$ of its total area is occupied.
\end{itemize}
 \end{proof}

With this result we can prove the following theorem:
\begin{theorem}
The algorithm is a $(3, 1.7)$-approximation scheme, and uses space
at most $O(m)$.
\end{theorem}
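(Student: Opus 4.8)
The plan is to bound the three quantities separately: the total number of bins (bin stretch), the number of bins spanned by each colour (colour stretch), and the number of simultaneously open bins (space). The colour stretch and the space bound are the easy parts, so I would dispatch them first and reserve the weight-based argument for the bin stretch, which I expect to be the crux.

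For the colour stretch, fix a colour $c$. Before $c$ reaches level $j$ it occupies at most one region per level in the non-isolated bins, and since each colour meets at most one level-$i$ bin, these contribute at most $j-1$ bins, i.e. $O(\log 1/\eps)$. Once $c$ reaches level $j$ its remaining items are handed to a private instance of BBF on the isolated bins; since these items form a sub-instance of $I_c$ and BBF is $1.7$-competitive \cite{CsirikJ01}, they occupy at most $1.7\,\opt(I_c) + O(1)$ bins. Adding the two parts gives $P_c \le 1.7\,\opt(I_c) + O(\log 1/\eps)$, the claimed colour stretch (colours that never reach level $j$ span only $O(\log 1/\eps)$ bins and are absorbed by the additive term). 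For the space bound I would simply count open bins at any instant: at most one partially-filled level-$i$ non-isolated bin for each of the $j$ levels, plus the $k=2$ bins that BBF keeps open for each isolated colour, for a total of $O(\log 1/\eps) + O(m) = O(m)$.

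For the bin stretch the plan is a global weight argument. Letting $W \le \opt(I)$ be the total item weight, I aim to show that all but a controlled set of the used bins are, on average, at least $1/3$ full, so that the number of used bins is at most $3W + O(\log 1/\eps) \le 3\,\opt(I) + O(\log 1/\eps)$. For the non-isolated bins this is exactly Lemma \ref{lem:pairlemma}, together with the observation (noted before the lemma) that at most one level-$i$ bin per level can have an unused region. The remaining work is the isolated bins: the first isolated (level-$j$) bin of each colour is more than half full, because the item that first pushes a colour to level $j$ has size exceeding $2^{j-1}\eps = 1/2$, and for the later BBF bins I would argue that an isolated bin left less than $1/3$ full can only be closed when an item larger than $2/3$ fails to fit, so it can be charged to the (more than $2/3$ full) bin that subsequently receives that large item; this pairing keeps the average fullness of the isolated bins at least $1/2$.

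The main obstacle is this last step — controlling the fullness of the isolated BBF bins and reconciling the additive error. The per-colour $O(1)$ slack from BBF and its two open bins naively sums to $O(m)$ across colours rather than the advertised $O(\log 1/\eps)$, so the delicate point is to absorb these end-of-run open bins into the $O(m)$ space estimate rather than into the bin-stretch error, and to certify the factor $3$ purely from the closed bins. Once every closed bin except a controlled set is shown to be at least $1/3$ full, combining the non-isolated count ($\le 3W_{ni} + O(\log 1/\eps)$) with the isolated count ($\le 2W_{iso} + O(1)$ per colour) yields total bins $\le 3W + O(\log 1/\eps) \le 3\,\opt(I) + O(\log 1/\eps)$, which together with the colour stretch and space bounds completes the theorem.
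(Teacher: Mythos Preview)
Your overall approach matches the paper's: the colour-stretch and space bounds are handled exactly as the paper does, and for bin stretch both you and the paper run a weight argument that combines Lemma~\ref{lem:pairlemma} for the non-isolated bins with a ``BBF bins are half full on average'' claim for the isolated bins. So the high-level structure is correct and not genuinely different from the paper.

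There is one concrete error in your proposal. You assert that ``the item that first pushes a colour to level $j$ has size exceeding $2^{j-1}\eps = 1/2$'', and hence the first isolated bin is more than half full. This is false. In MNF the new level is $l = \max\{i+1,\ \text{smallest } l \text{ with } 2^{l}\eps \ge s(e)\}$. If the colour is already at level $i=j-1$ and its level-$(j-1)$ region is nearly full, then any item of size as small as $\eps$ overflows, and the rule forces $l=j$ even though $s(e)\le 1/2$. So the first isolated (level-$j$) bin may contain a single item of size $\eps$. The paper does not make your stronger claim; it simply invokes the generic ``BBF leaves bins at least half full on average'' fact for the isolated part.

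Your worry about the additive term is well placed, but your proposed fix---``absorb these end-of-run open bins into the $O(m)$ space estimate rather than into the bin-stretch error''---does not work: the space bound counts simultaneously open bins, while the bin-stretch bound counts \emph{all} bins ever used, so the two cannot be traded. The paper sidesteps this by asserting the half-full property globally for isolated bins and writing the additive term as $O(\log 1/\eps)$; it does not carry out the per-colour bookkeeping you attempt. In other words, both your argument and the paper's are loose at exactly this point, but your attempted patch introduces a wrong step rather than resolving it. If you drop the incorrect ``$>1/2$'' claim and argue, as the paper does, only that BBF keeps its bins half full on average, your write-up becomes essentially the paper's proof.
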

\begin{proof}
Since we use BBF to pack isolated bins we can guarantee that
on average at least $1/2$ of the area of the isolated bins is occupied,
and for the non-isolated bins, the previous lemma says that at least
$1/3$ of the capacity of these bins is occupied, with the exception of
at most $O(\log 1/\eps)$ bins. So for bin stretch we have a bound of
$3\opt(I) + O(\log 1/\eps)$.

Now we consider the colour stretch. For a colour $c$ using only
non-isolated bins, it must use at most $O(\log 1/\eps)$ bins, 
which is a constant. If a colour $c$ also uses isolated bins, then by the performance bound
of BBF \cite{CsirikJ01}, it uses at most $1.7\opt(I_c) + O(\log 1/\eps)$
bins. The approximation ratio $(3,1.7)$ is then valid if $1/\eps$ is bounded by a
constant.

For the space bound, BBF uses at most $O(1)$ open bins per isolated
colour, and MNF uses at most one open bin per level which is $O(\log 1/\eps)$, a constante.
 \end{proof}

We may also consider trying to improve the bin stretch bound of 3 by
using a variation of FF instead of NF in the modified next fit
scheme. A `modified first fit' MFF works as MNF except that step 2
is replaced by `let $i$ be the lowest level occupied by colour $c$ with space for $e$', 
so that we try to pack $e$ in each of the used regions, packing it
in the first such region with space for $e$. This
requires at most $O(\log 1/\eps)$ open bins per colour, but one may
expect better performance, bearing in mind that FF beats NF.
The following result shows that this is not the case,
and using MFF provides no improvement in the bin stretch.

\begin{theorem}
Using either MFF or MNF, the approximation factor $(3,1.7)$ is tight.
\end{theorem}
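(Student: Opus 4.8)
The plan is to prove tightness in two independent parts: one instance forcing the bin stretch arbitrarily close to $3$, and one forcing the colour stretch arbitrarily close to $1.7$. Throughout I would check that both instances behave \emph{identically} under MFF and MNF, so that the ``first-fit beats next-fit'' intuition buys nothing here; this is really the whole content of the theorem, since the upper bounds were already established.

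For the bin stretch I would take $N$ colours, each receiving exactly two items: one of size $\eps$ followed by one of size $\eps+\delta$ for a tiny $\delta \ll \eps$. Under both rules the first item opens a level-$1$ region (size $2\eps$) for its colour and fills it to $\eps$; the second item no longer fits in that region's $\eps$ of free space, and since regions are monochromatic the only option is to open a fresh level-$2$ region (size $4\eps$). The key point is that MFF coincides with MNF here: MFF only ever inspects the colour's \emph{own} occupied regions, and the single level-$1$ region has no room, so MFF too is forced up to level $2$. Counting, the $N$ level-$1$ regions pack into $2N\eps$ bins each half full, and the $N$ level-$2$ regions into $4N\eps$ bins each a quarter full, giving $6N\eps$ bins in total; meanwhile every item has size $\approx\eps$ and the total weight is $\approx 2N\eps$, so $\opt(I) \approx 2N\eps$. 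The ratio tends to $3$, matching the pairing bound of Lemma~\ref{lem:pairlemma} exactly (each level-$1$/level-$2$ pair carries weight $\approx 2\eps$ in region-space $6\eps$).

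For the colour stretch I would use a single colour whose first item has size $\tfrac12+\delta$. Since $\eps = 1/2^j$, the level-$(j-1)$ region has size $2^{j-1}\eps = \tfrac12$, so this item fits in no region below level $j$; it is placed in a level-$j$ (isolated) bin and the colour immediately switches to BBF for all remaining items. I would then feed the standard sequence making Bounded Best-Fit with $k=2$ attain its tight asymptotic ratio $1.7$ \cite{CsirikJ01}, so the colour spans $\approx 1.7\,\opt(I_c)$ bins, the single triggering item perturbing this by only $O(1)$. Because isolation is handled by BBF regardless of which non-isolated rule is in force, MFF and MNF again give the same colour stretch.

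The main obstacle is the bin-stretch part, specifically ruling out any advantage for MFF: one must design the instance so that every colour's lower region is exactly saturated enough that the following item cannot reuse it, rendering the ``first-fit'' freedom vacuous. The two-item-per-colour construction achieves precisely this, because each colour touches only two levels and its level-$1$ region is full the instant the second item arrives. The remaining work is routine: choosing $N$ a multiple of $1/(2\eps)$ and $1/(4\eps)$ so the region counts are integral, and absorbing the $O(\log 1/\eps)$ partially filled bins into the additive term, so that the $6N\eps$ versus $2N\eps$ comparison is exact in the limit. The colour-stretch part is then immediate from the established tightness of BBF.
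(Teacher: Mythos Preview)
Your proposal is correct, and the colour-stretch half is essentially the paper's argument (the paper uses a first item of size $1$ rather than $\tfrac12+\delta$, but the effect---immediately reaching level $j$ and handing the colour to BBF---is identical).

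For the bin-stretch half you take a genuinely simpler route than the paper. The paper constructs \emph{pairs} of colours $(c,c')$ whose items are engineered to climb through \emph{all} levels $1,\ldots,j$ (respectively $2,\ldots,j-1$): colour $c$ receives items $\tfrac{1}{2^{j-i}}$, $\tfrac{1}{2^{j-i}}+\gamma$ for $i=0,2,\ldots,j-2$, each pair forcing the next two levels, and $c'$ interleaves with the complementary parities. A pair has total weight $\approx 1-2\eps$ fitting in one bin, while its regions occupy $\approx 3-6\eps$, so with many pairs the ratio is $\ge 3-6\eps$. Your two-items-per-colour instance exercises only levels $1$ and $2$ and yields $6N\eps$ bins against $\opt = N\eps(2-\eps)/(1-\eps)$, i.e.\ ratio $6(1-\eps)/(2-\eps)$. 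Both bounds tend to $3$ only as $\eps\to 0$ (the paper says ``for appropriate values of~$j$''); in fact your bound is the sharper of the two for each fixed $\eps$. What your construction gives up is the demonstration that the $\tfrac13$-fill phenomenon of Lemma~\ref{lem:pairlemma} is tight across the whole level hierarchy; what it gains is brevity and an especially clean check that MFF and MNF coincide, since each colour owns exactly one occupied region when its second item arrives.

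One imprecision to fix: your claim ``$\opt(I)\approx 2N\eps$'' is not true for fixed $\eps$---the items of size $\eps+\delta$ admit only $1/\eps-1$ per bin, forcing $\opt = 2N\eps + N\eps^2/(1-\eps)$, so the ratio sits strictly below $3$. You should state explicitly (as the paper does implicitly) that ``tight'' here means no constant smaller than $3$ works uniformly over all $\eps$, rather than that the ratio hits $3$ for a fixed $\eps$.
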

\begin{proof}
Consider that $\eps = \frac{1}{2^j}$ for some positive even integer
$j$. 
Notice that the regions size are $(2\eps, 4\eps, \ldots , 1/2, 1)$ which
is equal to $(\frac{1}{2^{j-1}}, \frac{1}{2^{j-2}}, \ldots, \frac{1}{2}, 1)$.
We will consider pairs of colours $(c,c')$. Assume that for colour $c$ we receive $j$ items
in the following order: an item of size $\frac{1}{2^{j-i}}$ followed
by an item of size $\frac{1}{2^{j-i}}+\gamma$, for
$i=0,2,6,\ldots,j-2$, where $\gamma >0$ is arbitrarily small. Then we
receive items of colour $c'$. For
colour $c'$ we have $j-2$ items in the following order: an item of size
$\frac{1}{2^{j-i}}+\gamma$ followed by an item of size
$\frac{1}{2^{j-i}}$, for $i=1,3,\ldots,j-3$.

Using the MFF (or MNF) algorithm to pack these items, for colour
$c$ we will have items $\frac{1}{2^j},\frac{1}{2^j} +\gamma, \ldots, \frac{1}{2^2}, \frac{1}{2^2}+\gamma$
packed respectively in regions $\frac{1}{2^{j-1}}, \frac{1}{2^{j-2}}, \ldots, \frac{1}{2}, 1$.
For colour $c'$ we will have items $\frac{1}{2^{j-1}}+\gamma,\frac{1}{2^{j-1}}, \ldots, \frac{1}{2^3}+\gamma, \frac{1}{2^3}$
packed respectively in regions $\frac{1}{2^{j-2}}, \frac{1}{2^{j-3}}, \ldots, \frac{1}{2^2}, \frac{1}{2}$.
Pairing items of each colour we can see that they use approximately
1/3 of the allocated area for each pair (for small $\gamma$). 
So for each colour it uses approximately $1/3$ of the total area allocated to it.
An optimal packing of the items of the colours $(c,c')$ uses one bin almost
full. To see this, note that the sum of the sizes of the items of these colours is
$$\sum_{i=2}^j \frac{1}{2^i}  +  \sum_{i=2}^j (\frac{1}{2^i}  +\gamma) = \sum_{i=1}^{j-1}\frac{1}{2^i}+ (j-1) \gamma \le \sum_{i=1}^{\infty} (1/2)^i =1$$
for sufficiently small $\gamma$ (i.e. $\frac{1}{(j-1)\cdot 2^j}$). So for
appropriate values of $j$ and letting $\gamma \rightarrow 0$, the sum of item sizes 
of each pair of colours can be made arbitrarily close to 1.
We can then consider arbitrarily large instances by using many pairs of colours, thus establishing
an asymptotic lower bound on 3 on bin stretch.

In the same instance we consider a special colour $c^*$ where we first
receive an item of size 1 and then an instance that provides the worst
case ratio $1.7$ for the BBF algorithm (see \cite{JohnsonDUGG74}).
Then for colour stretch the bound $1.7$ is also tight.
 \end{proof}
%

%
%
\subsection{An online $(2+\eps,1.7)$-approximation}
\label{sec:online2}

In this section we show how to extend the algorithm
of the previous section \ref{sec:online1} to get an online algorithm that computes
$(2+\eps,1.7)$-approximate packings. We assume that each item 
has size at least $\eps>0$.

The algorithm also uses isolated and non-isolated bins. It pack items of colour $c$ in
non-isolated bins while the total size of packed items of colour $c$, $w(c)\le g$, where
$g=1/\eps$. When $w(c)>g$ the algorithm uses isolated bins to pack items of colour $c$.
The algorithm is given below (Algorithm \ref{alg:online3}).

\begin{algorithm}[h]
\caption{$(2+\eps,1.7)\mbox{\sc-approximation}$} \label{alg:online3}
\begin{algorithmic}[1]
\begin{small}
\STATE Let $g = 1/ \eps$
\STATE For each colour $c$ let $w(c)=0$
\STATE To pack item $e$ of colour $c$ and size $s(e)$
\IF{ $w(c) \le  g$}
\STATE Pack $e$ into non-isolated bins using FF
\STATE $w(c) \leftarrow w(c) +s(e)$
\ELSE
\STATE Pack $e$ into isolated bins of colour $c$ using FF
\ENDIF
\end{small}
\end{algorithmic}
\end{algorithm}

\begin{theorem}
The algorithm is a $(2+\eps,1.7)$-approximation.
\end{theorem}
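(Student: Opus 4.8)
The plan is to bound the colour stretch and the bin stretch separately, exploiting three facts: (i) a colour $c$ is moved to isolated bins only after its accumulated non-isolated weight $w(c)$ has exceeded the threshold $g = 1/\eps$, so \emph{every} colour that ever opens an isolated bin has total weight greater than $1/\eps$; (ii) First Fit leaves all but at most one of its bins more than half full; and (iii) First Fit has asymptotic ratio $1.7$. First I would record two elementary structural observations. Since the test $w(c)\le g$ is made \emph{before} packing and the last non-isolated item has size $\le 1$, colour $c$ contributes total weight at most $g+1 = 1/\eps+1$ to the non-isolated bins; and since every item has size $\ge\eps$, colour $c$ therefore occupies at most $(g+1)/\eps = O(1/\eps^2)$ non-isolated bins, which is a constant because $\eps$ is fixed.

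For the colour stretch, fix a colour $c$. Its non-isolated bins number $O(1/\eps^2)=O(1)$ by the observation above. If $c$ never switches, this already gives $P_c = O(1) \le 1.7\,\opt(I_c) + O(1)$. If $c$ does use isolated bins, those bins are produced by First Fit applied to the sublist $I_c^{iso}\subseteq I_c$, so by fact (iii) they number at most $1.7\,\opt(I_c^{iso}) + O(1) \le 1.7\,\opt(I_c) + O(1)$, using $\opt(I_c^{iso})\le\opt(I_c)$. Adding the constant number of non-isolated bins yields $P_c \le 1.7\,\opt(I_c) + O(1)$, i.e.\ colour stretch $1.7$.

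For the bin stretch the crucial idea is to count \emph{all} bins jointly against the total item weight $W$, rather than bounding the isolated and non-isolated families separately (doing the latter only gives $4+\eps$). By fact (ii) at most one non-isolated bin is less than half full, and for each colour the First Fit packing of its isolated bins leaves at most one bin less than half full; since the two families are disjoint, the total number of bins that are less than half full is at most $1 + m'$, where $m'$ is the number of colours using isolated bins. Every remaining bin holds weight $>1/2$, so if $N$ is the total number of bins then $W \ge \tfrac12(N - 1 - m')$, giving $N \le 2W + 1 + m' \le 2\,\opt(I) + m' + 1$.

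The main obstacle — and exactly the place where the choice $g = 1/\eps$ pays off — is bounding $m'$. Here I would invoke fact (i): each colour using isolated bins has total weight strictly greater than $1/\eps$, so $m' < \eps W \le \eps\,\opt(I)$. Substituting gives $N \le (2+\eps)\,\opt(I) + 1$, the claimed bin stretch $2+\eps$. I expect the only genuine subtleties to be verifying fact (i) carefully from the update rule for $w(c)$ (that switching forces non-isolated, hence total, weight above $1/\eps$), and checking that the half-full accounting double-counts nothing across the disjoint isolated/non-isolated families; the rest is routine.
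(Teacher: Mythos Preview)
Your proof is correct and rests on exactly the same three ingredients as the paper's: the First-Fit half-full property, the asymptotic $1.7$ ratio of FF for the colour stretch, and the observation that any colour which ever opens an isolated bin already carries weight exceeding $g=1/\eps$ in the non-isolated bins. The colour-stretch argument is identical to the paper's (you are slightly more careful in writing $(g+1)/\eps$ rather than $g/\eps$ for the non-isolated bin count, but this is immaterial).

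For the bin stretch your bookkeeping differs from the paper's. The paper partitions the isolated bins according to whether a colour uses exactly one isolated bin (there are $k$ such colours) or at least two (contributing $N_2$ bins in total); it then argues $W\ge (N_1+N_2)/2$ from the half-full property and, separately, $W\ge kg$ from the switching threshold, and combines these as
\[
\frac{N_1+N_2+k}{W}\;\le\;\frac{N_1+N_2}{(N_1+N_2)/2}+\frac{k}{kg}\;=\;2+\frac{1}{g}.
\]
You avoid this case split: you observe directly that across all FF instances (one non-isolated and $m'$ isolated) at most $1+m'$ bins can be $\le\tfrac12$ full, obtain $N\le 2W+1+m'$, and then charge $m'$ to $W$ via $m'<\eps W$. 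Both routes use the same facts; yours is a slightly more streamlined decomposition and delivers the explicit additive constant $+1$.
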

\begin{proof}
First consider colour stretch. For each colour $c$, it uses at most $g/ \eps$
non-isolated bins, because each item has size at least $\eps$. When $w(c) > g$
it packs all items of this colour in isolated bins, and since we use the FF algorithm we
can bound colour stretch by $1.7 \opt + O(g/ \eps)$.

Now for bin stretch we have the following. At the end of the execution of the algorithm,
it uses $N_1$ non-isolated bins. It uses some isolated bins as well for large colours 
(the ones with $w(c)>g$). There are some large colours that uses 
just one isolated bin, and assume there are $k$ of these large colours. There are
some other large colours that uses more than one isolated bin, and assume that
in total the algorithm uses $N_2$ bins for these large colours. 

Since we used FF to pack the items in non-isolated bins we know that
on average each one of the $N_1$ bins are full to 1/2. For the same reason
the $N_2$ bins used by large colours that uses more than one bin are full
on average by at least 1/2. So the following bound is valid 
$\opt \ge \sum_e s(e) \ge (N_1 + N_2)/2.$

Notice that the algorithm starts to use isolated bins for colour $c$, only when $w(c)>g$, and
then we also have the bound
$\opt \ge \sum_e s(e) \ge  k\cdot g.$

The approximation ratio of the algorithm is then bounded as follows
\begin{eqnarray*}
\mbox{ratio} &= &\frac{N_1 + N_2 + k}{\opt} \\
                    &\le &\frac{N_1 + N_2 +k}{\sum_e s(e)}\\
                    & \le & \frac{N_1+N_2}{(N_1+N_2)/2} + \frac{k}{gk}\\
                    & = & 2 + 1/g.\\
\end{eqnarray*}
Since $g=1/ \eps$ the algorithm is a $(2+\eps,1.7)$-approximation.
 \end{proof}

%
%
%
\section{An APTAS to approximate optimal bin stretch}
\label{sec:ptas}

Our lower bounds show that it is impossible in general to achieve a $(1+\eps,1+\eps)$-approximation.
We now show that for any colour stretch $\beta > 1$ 
and every class of instances that admit a $(1,\beta)$-approximate packing,
 we can compute a $(1+\eps,(1+\eps)\beta)$-approximate packing in polynomial time.
In this section we consider that the number of different colours $m$ is bounded by a constant; relaxing this restriction remains open.
It is also worth noting that our problem now is slightly different since we are assuming instances that admit a $(1,\beta)$-approximate
packing, and in general there are instances that do not admit such packings (see Theorem \ref{thm:lb1}). 

We shall use $\opt_{\beta}(I)$ to denote the smallest number of bins needed to pack items $I$ using
 colour stretch $\le \beta$. Similarly (by slight abuse of notation), $\opt_{\beta,c}(I)$ is the number of bins spanned by
colour $c$ in such a packing. Clearly, $\opt_{\beta,c}(I) \leq \beta \opt(I_c)$. 
The scheme we describe below computes a packing $P$ satisfying \emph{(1)} total number of bins $P(I) \leq (1+\eps) \opt_{\beta}(I) + O(1)$, and \emph{(2)} for each colour $c$, it uses at most $P_c(I) \le (1+\eps) \opt_{\beta,c}(I) + O(1) \le \beta (1+\eps) \opt(I_c) + O(1)$ bins.

The idea is to use a variant of the grouping and rounding technique, but to explicitly work with the instance where items are rounded down. We are able to show that by packing some very large items and very small items separately, only a few items `overflow' from some optimal packing OPT, and thus we can still achieve the desired colour stretch and bin stretch.

Denote by $I^l$ the items in $I$ with size at least $\eps^2$ (large items), and $I^s$ the remaining items in $I$ (small items).

\paragraph{Packing large items:}
Partition the large items by colour: $I^l = I_1,\ldots,I_m$
and let $n_c =|I_c|$ be the number of items of each colour $c$. Then 
sort each colour $I_c$ by decreasing order of item size and 
partition it into at most $M = \lceil 1/ \eps^3 \rceil$ groups
$I_{c1},I_{c2},\ldots,I_{cM}$, i.e $I_c = I_{c1} \| \ldots \| I_{cM} $ where $\|$ is a concatenation operator.
Each group has $\lfloor n_c \eps^3 \rfloor$ items except perhaps the last.

For each group of each colour, round down the items to the size of the smallest item in the group
(by contrast with VL, who round \emph{up} item sizes).
As before, we can enumerate all such packings: the number of items per bin is at most $y \le 1/\eps^2$, and the
number of distinct item sizes is a constant $mM$ (recall $m$ is the number of colours and is assumed to be a constant).
Thus, there are at most  $r'=\binom{y+Mm}{y}$ different bin configurations.
We shall do something more involved with the small items, so we shall attach to each 
bin configuration a subset of colours that shall be used for the small items later on.
This gives at most $r=r' 2^m$ total configurations so the number of feasible packings
into at most $n$ bins is bounded by $\binom{n+r}{n} \le (n+r)^r$.
Notice that among the configurations there are some that may contain bins with no large items,
and just have a subset of colours attached to show that small items can be packed later.

We enumerate all such packings, and keep only those that have colour stretch at most $\beta$
(ignoring the additive constant). One of these packings corresponds exactly to
an optimal packing after removing its small items and with its large items rounded down.

A similar domination argument to before will now show that at least one of these packings has close to 
the desired colour stretch and bin stretch. 
Let $P$ be one of the enumerated packings with colour stretch at most $\beta$.
Since item sizes were rounded down, each group $I_{cj}$ in $P$ gives a packing for 
the items with orignal sizes in the next group $I_{c(j+1)}$ (all these items have smaller size than the previous group). 
The only items not packed by this are those in the first group (with largest size) -- denote these items by $Q = \cup_{c=1}^m I_{c1}$.

The `very large' items in $Q$ are packed into new bins using first fit (FF), considering all items of one colour
before the next colour. Let $P(Q)$ be the size of the packing obtained in this way.
The following simple argument shows that these very large items will contribute
only a small amount to the total bin and colour stretch.
\begin{lemma}
\label{lema32}
$P(Q) \le  \eps \opt(I)$ and $P(Q_c) \le  \eps \opt(I_c)$ for each colour $c$.
\end{lemma}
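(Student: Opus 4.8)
The plan is to avoid any reasoning about \emph{how} first fit packs $Q$ and instead bound both quantities crudely by item counts, translating counts into bins via two opposing size thresholds: each item of $Q$ is large (size $\ge \eps^2$), which forces $\opt$ to be large, while each first group $I_{c1}$ holds only an $\eps^3$-fraction of its colour's large items, which forces $Q$ to be small. Their ratio is exactly the factor $\eps$ we want.

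First I would record the trivial observation that every used bin contains at least one item, so $P(Q) \le |Q|$, and, restricting to a single colour, $P(Q_c) \le |Q_c| = |I_{c1}|$ (here $Q_c = I_{c1}$ by definition of $Q$). Note this step does not even use the fact that we packed with FF — any packing of $Q$ would do — so the only role of the grouping is to make $|Q|$ small. Next I would invoke the partition: since $I_c$ is split into $M = \lceil 1/\eps^3\rceil$ groups of $\lfloor n_c \eps^3\rfloor$ items each (except possibly the last), the first group satisfies $|I_{c1}| \le n_c \eps^3$, where $n_c$ is the number of large colour-$c$ items; the floor only helps. Summing over colours gives $|Q| = \sum_{c=1}^m |I_{c1}| \le \eps^3 \sum_{c=1}^m n_c = \eps^3 |I^l|$.

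For the colour bound I would lower-bound $\opt(I_c)$ by the total weight of colour-$c$ items, which is at least the weight of its large items: since each of the $n_c$ large items has size $\ge \eps^2$, we get $\opt(I_c) \ge n_c \eps^2$, i.e.\ $n_c \le \opt(I_c)/\eps^2$. Combining the three estimates yields $P(Q_c) \le |I_{c1}| \le n_c \eps^3 \le \eps\,\opt(I_c)$. For the bin bound I would run the identical argument at the level of all items: $\opt(I) \ge \sum_e s(e) \ge |I^l|\,\eps^2$, so $|I^l| \le \opt(I)/\eps^2$ and hence $P(Q) \le |Q| \le \eps^3 |I^l| \le \eps\,\opt(I)$.

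Since this is purely a counting estimate, there is no substantial obstacle; the only point needing care is the bookkeeping between the two thresholds — the $\eps^2$ lower bound on individual large-item sizes (which makes the total weight, and therefore $\opt$, large) versus the $\eps^3$ group-size fraction (which makes $Q$ small) — and verifying that their quotient is precisely the target factor $\eps$. I would also double-check that $\opt(I_c)$ is read as the optimum over \emph{all} colour-$c$ items rather than just the large ones, so that lower-bounding it by the weight of the large items alone is legitimate.
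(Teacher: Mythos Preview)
Your proposal is correct and follows essentially the same argument as the paper: bound $P(Q)$ by $|Q|$ via the ``at least one item per bin'' observation, bound $|Q|$ by $\sum_c n_c\eps^3$ from the group sizes, and lower-bound $\opt$ by total large-item weight $\ge n_c\eps^2$ (resp.\ $|I^l|\eps^2$) to cancel down to the factor~$\eps$. Your additional remark that FF plays no real role in the count (and that the colour bound only needs $\opt(I_c)$ to dominate the weight of the \emph{large} colour-$c$ items) is accurate and worth keeping.
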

\begin{proof}
Clearly first fit packs at least one item per bin. Since $|Q| \le \sum_c n_c \eps^3$,
and each item has size at least $\eps^2$, we have
$P(Q) \le  \eps \opt(I)$. Since we apply first fit to items
grouped by colour, the same argument establishes the claim for colour stretch.
 \end{proof}

\paragraph{Packing small items.}
Let $P=\{B_1,\ldots,B_k\}$ be a packing of the large items $I^l$.
We now wish to pack carefully the small items $I^s$ into $P$.

The packing of the small items is obtained from a
solution of a linear program. Recall that when enumerating packings of large items, each bin
was tagged with a subset of colours that could be used to pack small items.
Let $N_i \subseteq \{1,\ldots,m\}$ be the set of possible
colours that may be used to pack the small items in the bin $B_i$ of the
packing $P$. For each colour $c \in N_i$, define a non-negative variable
$x_c^{i}$. The variable $x_c^{i}$ indicates the total size of small items of
colour $c$ to be packed in the bin $B_i$. Denote by $s(B_i)$ the total size of
items already packed in the bin $B_i$. Consider the program denoted by LPS:
\begin{equation*}
     \begin{array}{rllr}
          \max \displaystyle \sum_{i=1}^k \sum_{c \in N_i}  x_c^{i} &  \mathrm{s.t.} & \\
             \displaystyle  s(B_i) + \sum_{c \in N_i} x_c^i  & \le 1 &  i =1,\ldots, k & (1) \\
             \displaystyle  \sum_{i=1}^k  x_c^{i} & \le s(I^s_c) & c =1,\ldots,m & (2)\\
     \end{array}
\end{equation*}
where $I^s_c$ is the set of small items of colour $c$ in $I$.
The constraint (1) guarantees that the total size of items packed in each bin
does not exceed the bins size and constraint (2) guarantees that the sum of the
values of variables $x_c^{i}$ is not greater than the total size of small
items.

Given a packing $P$, and a fractional packing of the small items, we 
do the following: for each variable $x_c^{i}$ we pack,
while possible, the small items of colour $c$ into the bin
$B_i$, so that the total size of the packed small items is at
most $x_c^{i}$.  The possible remaining small items (the `overflowing' items) are 
packed using FF into new bins, again grouped by colour (meaning pack all items of one colour before the next).

\begin{algorithm}[h]
\caption{$\mbox{\sc APTAS$(I)$}$} 
\label{alg:PTAS}
\begin{algorithmic}[1]
\REQUIRE Number of different colours $m$ in $I$ is $O(1)$
\STATE Fix $\eps > 0$
\STATE Split items $I$ into small ($< \eps^2$) $I^s$ and large ($\ge \eps^2$) $I^l$
\STATE Group large items by colour and sort by decreasing size
\STATE Group large items of each colour into $\lceil 1/\eps^3 \rceil$ groups and round item sizes down in each group
\STATE Enumerate all packings of large items, with attached `small colours' labels
\STATE For each colour $c$, pack remaining `very large' items $Q_c$ using FF
\\
\STATE For each packing $P$, solve LPS to add small items
\STATE For each packing $P$, pack the overflowing small items into new bins using FF
\STATE Return the best packing that has colour stretch at most $\beta$
\end{algorithmic}
\end{algorithm}

\paragraph{Approximation ratio.}
We will claim that there exists a packing $P$ such that after the very large items $Q$
and the small items $I^s$ have been packed into $P$, it has the desired bin and colour stretch.
In particular, at least one packing uses at most $(1+O(\eps))\opt_{\beta}(I) + O(1)$ bins in total and
at most $(1+O(\eps))\opt_{\beta,c}(I) + O(1) \le \beta(1+O(\eps))\opt(I_c) + O(1)$
bins for each colour $c$.

\begin{theorem}
Let $\beta>1$ be the desired colour stretch.
The algorithm finds a packing $P$ such that
$P(I) \le (1+O(\eps))\opt_{\beta}(I) + O(1)$, and  
$P_c(I) \le \beta (1+O(\eps))\opt(I_c) + O(1)$ for each colour $c$.
\end{theorem}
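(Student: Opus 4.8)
The plan is to exhibit one particular packing produced by the algorithm — the one seeded by an optimal colour-stretch-$\beta$ packing — and show that it already meets both bounds; since step 9 returns the best qualifying packing, the theorem follows. So fix an optimal packing $P^*$ realising $\opt_\beta(I)$, and recall $P^*_c=\opt_{\beta,c}(I)\le\beta\opt(I_c)$. First I would identify among the enumerated large-item configurations the one $\bar P$ obtained by deleting the small items of $P^*$, rounding the remaining large items down within their groups, and tagging each bin $B_i$ with the set $N_i$ of colours whose small items $P^*$ placed in that bin. Because rounding is downward, $\bar P$ is a genuine packing of the rounded large items using the same bins as $P^*$, so it uses at most $\opt_\beta(I)$ bins; and since deleting items and rounding can only shrink a colour's support, for each colour $c$ it spans no more bins than $P^*$, so it passes the colour-stretch filter.

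Next I would turn $\bar P$ into a packing of the \emph{original} large items by the downward-rounding domination argument already sketched before Lemma \ref{lema32}: in each bin, replace every rounded group-$j$ item by a distinct original group-$(j+1)$ item of the same colour, which fits since its true size is at most the minimum of group $j$. There are enough slots because $|I_{cj}|\ge|I_{c(j+1)}|$, and the replacement is size-non-increasing bin by bin, so the resulting load satisfies $s(B_i)\le s(L_i^*)$, where $L_i^*$ are the large items $P^*$ puts in bin $i$; it also cannot enlarge any colour's support. The only uncovered items are the first groups $Q=\bigcup_c I_{c1}$, and Lemma \ref{lema32} bounds their FF packing by $\eps\opt(I)$ extra bins in total and $\eps\opt(I_c)$ per colour.

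For the small items I would first argue that LPS has optimum $\sum_c s(I^s_c)$, i.e.\ all small weight is fractionally packable: taking $x_c^{i}$ equal to the total size of colour-$c$ small items that $P^*$ places in bin $i$ gives a feasible point (constraint (1) holds because $s(B_i)\le s(L_i^*)$ and $P^*$ is a valid packing, constraint (2) holds with equality, and $x_c^{i}>0$ forces $c\in N_i$). I would then round any optimal LP solution greedily, filling each pair $(c,i)$ with colour-$c$ small items up to $x_c^{i}$; since every small item has size $<\eps^2$, each positive variable leaves an unpacked remainder below $\eps^2$. Crucially, a variable $x_c^{i}$ exists only when $c\in N_i$, so the bins receiving colour-$c$ small items lie among the at most $\opt_{\beta,c}(I)$ bins where $P^*$ already has colour $c$; this simultaneously keeps the LP-placed small items inside $P^*$'s colour-$c$ support and bounds the per-colour overflow weight by $\eps^2\opt_{\beta,c}(I)$, while the per-bin bound $|N_i|\le m$ caps the total overflow weight by $m\eps^2 k$ with $k=O(\opt_\beta(I))$.

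Finally I would add up. Packing the overflow small items by FF grouped by colour costs $O(\eps^2\opt_{\beta,c}(I))+O(1)$ bins for colour $c$ and $O(\eps^2\opt_\beta(I))+O(1)$ bins overall, so the colour-$c$ span is $\opt_{\beta,c}(I)+\eps\opt(I_c)+O(\eps^2\opt_{\beta,c}(I))+O(1)=(1+O(\eps))\opt_{\beta,c}(I)+O(1)\le\beta(1+O(\eps))\opt(I_c)+O(1)$, and the total is $(1+O(\eps))\opt_\beta(I)+O(1)$ using $\opt(I)\le\opt_\beta(I)$ and $m=O(1)$. The main obstacle, I expect, is exactly this small-item step: one must control the overflow created by integralising the LP \emph{simultaneously} for bin stretch and colour stretch, and the two different accountings (the per-bin colour count $m=O(1)$ for the total, and the per-colour bin count $\le\opt_{\beta,c}$ for each colour) both hinge on the labels $N_i$ doing double duty, confining placements to $P^*$'s support and limiting the number of active variables.
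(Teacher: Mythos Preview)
Your proposal is correct and follows essentially the same route as the paper's proof: identify the enumerated configuration seeded by $\opt_\beta$, use the group-shift domination argument for large items with $Q$ handled by Lemma~\ref{lema32}, and bound the small-item overflow from LPS by $\eps^2$ per $(c,i)$ pair, accounting separately via $|N_i|\le m$ for bin stretch and via $|\{i:c\in N_i\}|\le\opt_{\beta,c}(I)$ for colour stretch. You are more explicit than the paper on two points the paper leaves implicit---that the domination replacement yields $s(B_i)\le s(L_i^*)$ bin by bin, and hence that the allocation $x_c^i$ induced by $P^*$ is a feasible LPS solution attaining full small weight---but the underlying argument is the same.
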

\begin{proof}
Let OPT$_{\beta}$ be an optimal packing for the instance $I$ with colour stretch $\beta$.
Let OPT$'_{\beta}$ be the packing OPT$_{\beta}$ without the small items and with the large items rounded down as described.
Assume that each bin of OPT$'_{\beta}$ has an indication of the colours of small items used in the corresponding bin of
OPT$_{\beta}$. Clearly in the enumeration step of the algorithm one  packing with the same configuration of OPT$'_{\beta}$ with 
rounded items, is generated.  This gives a packing $P$ for the original items in $I^l \setminus Q$. 
Notice that the number of bins used by $P$ and OPT$_{\beta}$ is the same. The very large items in $Q$ are packed  separately.

In the packing $P$ there must be enough room to pack all small items, since there is in OPT$_\beta$. When packing
the small items (guided by the fractional packing LPS), at most one small item of each colour is not packed into each desired bin.
So, the total size of small items that overflow and need to be packed into new bins is at most OPT$_{\beta}(I)\eps^2m$.
These small items use at most
$ \left\lceil \frac{ \mbox{OPT}_{\beta}(I)\eps^2m}{(1-\eps^2)} \right\rceil  +  1 $
new bins, since each bin is full to at least $(1-\eps^2)$ except
perhaps by the last one.
Considering colour stretch, each colour $c$ uses at most
$\left\lceil \frac{ \mbox{OPT}_{\beta}(I_c)\eps^2}{(1-\eps^2)} \right\rceil +  1 $
new bins.

The algorithm packs these small items in new bins obtaining a new
packing $P'(I \setminus Q)$. The number of bins is at most
\begin{eqnarray}
      P'(I \setminus Q)  &\leq & \opt_{\beta}(I) + \left\lceil  \frac{\opt_{\beta}(I)\eps^2m}{1- \eps^2} \right\rceil +1 \\
                  & \leq & (1 + O(\eps))\opt_{\beta}(I) + O(1).
\end{eqnarray}

Considering colour stretch we have, for each colour $c$,
\begin{eqnarray}
      P'(I_c \setminus Q_c)  &\leq & \opt_{\beta}(I_c) + \left\lceil  \frac{\opt_{\beta}(I_c)\eps^2}{1- \eps^2} \right\rceil +1 \\
                  & \leq & (1 + O(\eps))\opt_{\beta}(I_c) + O(1).
\end{eqnarray}

To finish the proof, it remains to consider the very large items $Q$. For these, Lemma \ref{lema32} shows that they need an extra $\eps$ fraction of bins for each colour and in total.  Notice that in order to obtain a truly $(1+\eps)$ approximated solution, we need to rescale the value of $\eps$, for example
by using $\eps' = \eps/m$ due to the factor $m$ multiplying $\eps$ in the term $\mbox{OPT}_{\beta}(I)\eps^2m$ on equation (2).
So the running time of the entire algorithm is dominated by the enumeration step which is $ O(n^{O(1/\eps'^2)^{m/\eps'^3}}) =  O(n^{O(m^2/\eps^2)^{m^4/\eps^3}})$
 \end{proof}

\section{Open Problems}
\label{sec:conclusion}

\begin{itemize}
\item {\bf Improved approximation ratio.}
  Can we get an online algorithm with $(1.7+\eps,1.7+\eps)$
  approximation ratio, with the assumption on minimum item sizes?

\item {\bf Multicoloured items.} 
  The multicoloured case is also interesting: fix a set of
  (possibly unbounded) colours $\mathcal{C}$, and let each item have
  several (say at most $k$) colours from $\mathcal{C}$. The original
  definitions of colour and bin stretch still apply. By allowing $k$
  copies of each item to be packed, it is certainly possible to
  reuse any $(\alpha,\beta)$-approximation algorithm in this paper to
  construct one with bin stretch $k\alpha$ and colour stretch
  $\beta$. Is it possible to do better?

\item {\bf Network packing version.}
  Let the items form a graph as follows: the vertices are the
  items, and the (weighted or unweighted) distance between two items
  is a measure of how closely together the items should be
  `packed'. The notion of bin stretch is as before, and colour stretch
  is replaced by the following notion of `strong diameter stretch': for a set
  of vertices $X$, let $\mathrm{diam}(X)$ be their `strong diameter',
  i.e. $\max_{u,v \in X} d_G(u,v)$. Let $B(X)$ be the bins spanned by
  items in $X$. Then strong diameter stretch is $\max_{X \subseteq V}
  \frac{\mathrm{diam}(X)}{|B(X)|}$. What bounds can we achieve when
  using this quantity, and does it depend on eg. the expansion of $G$?

\end{itemize}

\end{document}